\newtheorem{thm}[subsection]{Theorem}
\newtheorem{lemma}[subsection]{Lemma}
\newtheorem{pro}[subsection]{Proposition}
\newtheorem{rk}[subsection]{Remark}
\newtheorem{defn}[subsection]{Definition}
\newtheorem{ex}{Example}
\numberwithin{equation}{section} \setcounter{tocdepth}{1}
\newcommand{\w}{{\bf w}}
\newcommand{\s}{{\sigma}}
\def \l {\lambda}
\def \s {\sigma}
\def \w {\omega}
\def \b {\beta}
\newcommand{\bea}{\begin{eqnarray}}
\newcommand{\eea}{\end{eqnarray}}
\begin{document}
\title[four Competing interactions for models]{four Competing interactions
for models with uncountable set of spin values on a Cayley Tree}

\author{U.A.Rozikov, F. H. Haydarov}

\address{U.\ A.\ Rozikov\\Institute of Mathematics and Information Technologies,\\
Tashkent, Uzbekistan,}
 \email {\tt rozikovu@yandex.ru}

\address{F.\ H.\ Haydarov\\ National University of Uzbekistan,
Tashkent, Uzbekistan.} \email {haydarov\_imc@mail.ru}

\begin{abstract} In this paper we consider four competing
interactions (external field, nearest neighbor, second neighbors
and triples of neighbors) of models with uncountable (i.e.
$[0,1]$) set of spin values on the Cayley tree of order two. We
reduce the problem of describing the "splitting Gibbs measures" of
the model to the analysis of solutions to some nonlinear integral
equation and study some particular cases for Ising and Potts
models. Also we show that periodic Gibbs measures for given models
are either translation-invariant or periodic with period two and
we give examples of the non-uniqueness of translation-invariant
Gibbs measures.
\end{abstract}
\maketitle

{\bf Mathematics Subject Classifications (2010).} 82B05, 82B20
(primary); 60K35 (secondary)

{\bf{Key words.}} Cayley tree $\cdot$ competing interactions
$\cdot$ configurations $\cdot$ Gibbs measures $\cdot$ Ising model
$\cdot$ Potts model $\cdot$ periodic Gibbs measures $\cdot$ phase
transitions.

\section{Introduction} \label{sec:intro}

Spin models on a graph or in a continuous spaces form a large
class of systems considered in statistical mechanics. Some of them
have a real physical meaning, others have been proposed as
suitably simplified models of more complicated systems. The
geometric structure of the graph or a physical space plays an
important role in such investigations. For example, in order to
study the phase transition problem on a cubic lattice $Z^d$ or in
space $\mathbb{d}$ one uses, essentially, the Pirogov-Sinai
theory; see\cite{PS1}, \cite{PS2}\cite{12}. A general methodology
of phase transitions in $\mathbb{Z}^d$ or $\mathbb{R}^d$ was
developed in \cite{7}; some recent results in this direction have
been established in \cite{MSS}, \cite{MSSZ} (see also the
bibliography therein).

On the other hand, on a Cayley tree $\Gamma_{k}$ one uses the
theory of Markov splitting random fields based upon the
corresponding recurrent equations. In particular, in Refs
\cite{1}-\cite{3},\cite{p1a}, \cite{11}-\cite{p2},
\cite{13}-\cite{14}, \cite{16} Gibbs measures on $\Gamma_{k}$ have
been described in terms of solutions to the recurrent equations.

A number of works have been focused on various versions of the
Ising model on $\Gamma_{k}$. For example, the case
$J_{3}=\alpha=0$ was considered in \cite{gr8}, \cite{gr15} and
\cite{gr16}, where exact solutions were given, for a model with
competing restricted interactions and zero external field.(Here
and below we refer to the structure of the Hamiltonian
(\ref{e1}).) The case $J=\alpha=0$ was considered in
\cite{gr7},\cite{gr16}. In particular, Ref. \cite{gr17} proves
that there are two translation-invariant and uncountably many
non-translation-invariant extreme Gibbs measures. In \cite{gr9}
the phase transition problem was solved for $\alpha=0,\ J\cdot
J_{1}\cdot J_{3}\neq 0$ and for $J_{3}=0,\ \alpha\cdot J\cdot
J_{1}\neq 0$. In \cite{p1a} one considered Ising model with four
competing interactions (i.e., $J\cdot J_{1}\cdot J_{3}\cdot
\alpha\neq 0$ ) on $\Gamma_2$, a Cayley tree of order two. These
papers are devoted to models with a \emph{finite} set of spin
values.

In Ref. \cite{6} a Potts model with a \emph{countable} set of spin
values on a Cayley tree has been considered: it was showed that
the set of translation-invariant splitting Gibbs measures contains
at most one point, independently on parameters of the the model.
This is a crucial difference with models with finitely many spin
values: the letter may have more than one translation-invariant
Gibbs measure.

During the past five years, an increasing attention was given to
models with a \emph{uncountable} many spin values on a Cayley
tree. Until now, one considered nearest-neighbor interactions
$(J_{3}=J=\alpha=0,\ J_{1}\neq 0)$ with the set of spin values
$[0,1]$. The following results was achieved: splitting Gibbs
measures on a Cayley tree of order $k$ are described by solutions
to a nonlinear integral equation. For $k = 1$ (when the Cayley
tree becomes a one-dimensional lattice $\mathbb{Z}^1$) it has been
shown that the integral equation has a unique solution, implying
that there is a unique Gibbs measure. (Confirming a sereies of
well-known results; see, e.g., \cite{DS} and references therein.)
For a general $k$, a sufficient condition was found under which a
periodic splitting Gibbs measure is unique. On the other hand, on
a Cayley tree $\Gamma_{k}$ of order $k = 2$, phase transitions
were proven to exist. See \cite{ehr2013}-\cite{eh2015},
\cite{new1}, \cite{re}-\cite{rh2015}. We note that all of these
papers were considered for the case $J_{3}=J=\alpha=0,\ J_{1}\neq
0$.

  In this paper we describe splitting Gibbs
measures on $\Gamma_{2}$ by solutions to a nonlinear integral
equation for the case $J_{3}^{2}+ J_{1}^{2}+J^{2}+\alpha^{2}\neq
0$ which a generalization of the case $J_{3}=J=\alpha=0,\
J_{1}\neq 0$. Also we prove that periodic Gibbs measure for
Hamiltonian (\ref{e1}) with four competing interactions is either
$translation$-$invariant$ or $G_{k}^{(2)}- periodic.$ In the last
section we give examples of non-uniqueness for Hamiltonian
(\ref{e1}) in the case $J_{3}\neq 0, J=J_{1}=\alpha=0$.

\section{Preliminaries}

{\it Cayley tree.} A Cayley tree $\Gamma_k=(V,L)$ of order $k\in
\mathbb{N}$ is an infinite homogeneous tree, i.e., a graph without
cycles, with exactly $k+1$ edges incident to each vertices. Here
$V$ is the set of vertices and $L$ that of edges (arcs). Two
vertices $x$ and $y$ are called nearest neighbors if there exists
an edge $l\in L$ connecting them. We will use the notation
$l=\langle x,y\rangle$. The distance $d(x,y), x,y \in V$, on the
Cayley tree is defined by the formula

$$d(x,y)=\min\{d |\ x=x_{0},x_{1},...,x_{d-1},x_{d}=y\in V \ \emph{such that the pairs}$$
$$\langle x_{0},x_{1}\rangle,...,\langle x_{d-1},x_{d}\rangle \emph{are neighboring vertices}\}.$$\vskip
0.3 truecm

Let $x^{0}\in V$ be fixed and set

$$W_{n}=\{x\in V\ |\ d(x,x^{0})=n\}, \,\,\,\,\ V_{n}=\{x\in V\ |\ d(x,x^{0})\leq n\},$$

$$L_{n}=\{l=\langle x,y\rangle\in L\ |\ x,y \in V_{n}\}.$$\vskip
0.3 truecm
 The set of the direct successors of $x$ is denoted by $S(x),$
 i.e.
 $$S(x)=\{y\in W_{n+1}|\ d(x,y)=1\}, \ x\in W_{n}.$$
 We observe that for any vertex $x\neq x^{0},\ x$ has $k$ direct
 successors and $x^{0}$ has $k+1$. Vertices $x$ and $y$ are called second neighbors, which fact is marked as
 $\rangle x,y\langle,$ if there exist a vertex $z\in V$ such that
 $x$, $z$ and $y$, $z$ are nearest neighbors. We will consider only second neighbors $\rangle x, y \langle,$ for which there
exist $n$ such that $x, y \in W_n$. Three vertices $x,\ y$ and $z$
are called a triple of neighbors in which case we write $\langle
x, y, z\rangle,$ if $\langle x, y \rangle,\ \langle y, z \rangle$
are nearest neighbors and $x,\ z \in
W_n,\ y \in W_{n-1}$, for some $n \in \mathbb{N}$.\\

{\it Gibbs measure for models with four competing interactions.}
 We consider models with four competing interactions where the spin
 takes values in the
 unit interval $[0,1]$. Given a set $\Lambda\subset V$ a
 configuration on $\Lambda$ is an arbitrary function
$\s_\Lambda:\Lambda\to [0,1]$, with values $\s(x),\ x\in \Lambda$.
The set of all configurations on $\Lambda$ is denoted by
$\Omega_\Lambda=[0,1]^\Lambda=\Omega$ and denote by ${\mathcal B}$
the sigma-algebra generated by measurable cylinder subsets of
$\Omega$.

Fix bounded, measurable functions $\xi_{1}:(t,u,v)\in[0,1]^{3}\to
\xi_{1}(t,u,v)\in R$ and $\xi_{i}: (u,v)\in [0,1]^2\to
\xi_{i}(u,v)\in R, \ i=2,3$. We consider a model with four
competing interactions on the Cayley tree which is defined by a
formal Hamiltonian

$$H(\sigma)=-J_{3}\sum_{\langle x,y,z\rangle}\xi_{1}\left(\sigma(x),\sigma(y),\sigma(z)\right)
-J\sum_{\rangle
x,y\langle}\xi_{2}\left(\sigma(x),\sigma(z)\right)$$
\begin{equation}\label{e1}
-J_{1}\sum_{\langle
x,y\rangle}\xi_{3}\left(\sigma(x),\sigma(y)\right)-\alpha\sum_{x}\sigma(x),
\end{equation}\\
 where the sum in the first term ranges all triples of
neighbors, the second sum ranges all second neighbors, the third
sum ranges all nearest neighbors, and  $J, J_{1}, J_{3},\alpha\in
R\setminus \{0\}$.

  Hamiltonian $H(\s)$ from Eqn (2.1) generates conditional Gibbs densities.
To make a consistent definition, let $\Lambda\subset V$ be a
finite set, of cardinality $|\Lambda|$. Denoting by $\lambda$ the
Lebesgue measure on [0,1], the set of all configurations on
$\Lambda$ is equipped with an a priori measure $\lambda_\Lambda$
introduced as the $|\Lambda|$-fold power of $\lambda$.

 Let $\Lambda\subset V$ be a finite set. We denote that $\partial(\Lambda)$ is
 the set of boundary points of $\Lambda$ i.e.,
$\partial(\Lambda)=\{y\in V\setminus\Lambda\ |\ x\in \Lambda,
<x,y>\}$. Next, put
$$\Omega_{\Lambda}^{\ast}=\underbrace{\Omega_{\Lambda}\times \Omega_{\Lambda}
\times...\times \Omega_{\Lambda}}_{|\partial(\Lambda)|}\ , \ \ \
\lambda_{\Lambda}^{\ast}=\underbrace{\lambda_{\Lambda}\times
\lambda_{\Lambda} \times...\times
\lambda_{\Lambda}}_{|\partial(\Lambda)|},$$ where $\times$ is a
direct product. Let $\bar{\sigma}(V\setminus\Lambda)$ be a fixed
boundary configuration. The total energy of configuration
$\sigma=\sigma_{\Lambda}\in \Omega_{\Lambda}$ under outer
condition $\bar{\sigma}_{V\setminus\Lambda}$ is defined as

$$H(\sigma_{\Lambda}\mid \bar{\sigma}_{V\setminus\Lambda})=-J_{3}\sum_{\langle x,y,z\rangle: \ x,y,z\in \Lambda}\xi_{1}
\left(\sigma(x),\sigma(y),\sigma(z)\right) -J\sum_{\rangle
x,y\langle: \ x,y\in
\Lambda}\xi_{2}\left(\sigma(x),\sigma(y)\right)$$\\
$$-J_{1}\sum_{\langle x,y\rangle: \ x,y\in \Lambda}\xi_{3}\left(\sigma(x),\sigma(y)\right)-\alpha\sum_{x\in
\Lambda}\sigma(x)-J_{3}\sum_{\langle x,y,z\rangle: \ x\in \Lambda
\textrm{and}\
z\notin \Lambda}\xi_{1}\left(\sigma(x),\sigma(y),\sigma(z)\right)$$\\
\begin{equation}\label{addition}-J\sum_{\rangle x,y\langle: \ x\in \Lambda, \ y\notin \Lambda
}\xi_{2}\left(\sigma(x),\bar{\sigma}(y)\right) -J_{1}\sum_{\langle
x,y\rangle: \ x\in \Lambda, \ y\notin \Lambda}
\xi_{2}\left(\sigma(x),\bar{\sigma}(y)\right),\end{equation} where
the first and forth sums are taken over triple of neighbors; the
second and sixth sums are taken over second neighbors and the
third and fifth sums are taken over nearest neighbors.

 For a configuration $\sigma_{\Lambda}: \Lambda\rightarrow [0,1]$ the
conditional Gibbs density is defined as

$$ \nu^{\Lambda}_{\bar{\sigma}_{V\setminus\Lambda}}(\sigma_{\Lambda})=\frac{1}{Z_{\Lambda}\left(
\bar{\sigma}_{V\setminus\Lambda}\right)} \exp \left(-\beta H
\left(\sigma_{\Lambda}\,||\,
\bar{\sigma}_{V\setminus\Lambda}\right)\right),$$ where
$\beta=\frac{1}{T},\ T>0,$ and
$Z_{\Lambda}\left(\bar{\sigma}_{V\setminus\Lambda}\right)$ is a
partition function, i.e.,

 \vskip 0.3truecm
$$Z_{\Lambda}\left(\bar{\sigma}_{V\setminus\Lambda})\right)=\int\!\!\!...\!\!\!\!
\int\limits_{\Omega_{\Lambda}^{\ast}\ \ }\exp \left(-\beta H
\left(\sigma_{\Lambda}\,||\,
\bar{\sigma}|_{\partial(\Lambda)}\right)\right)(\lambda^{\ast}_
{\Lambda})(d\sigma_{\Lambda}).$$

We note that if $x\in \Lambda$, $y\in V\setminus\Lambda$ and
$\langle x,y\rangle$ then $y \in \partial(\Lambda)$. Therefore, we
can exchange $\bar{\sigma}_{V\setminus\Lambda}$ for
$\partial(\Lambda).$ Finally, the conditional Gibbs measure
$\mu_{\Lambda}$ in volume $\Lambda$ under the boundary condition
$\bar{\sigma}|_{\partial(\Lambda)}$ is defined by

\begin{equation}\label{e2} \mu\left(\sigma\in \Omega:\
\sigma|_{\Lambda}=\sigma_{\Lambda}\right)=\int\!\!\!...\!\!\!\!
\int\limits_{\Omega_{\Lambda}^{\ast}\ \
}(\lambda^{\ast}_{\Lambda})(d\sigma_{\Lambda})
\nu^{\Lambda}_{\bar{\sigma}|_{\partial(\Lambda)}}(\sigma_{\Lambda}).
\end{equation}

\section{The integral equation}

 Let $h: [0,1]\times V\setminus \{x^{0}\}\rightarrow \mathbb{R}$ and
  $|h(t,x)|=|h_{t,x}|<C$ where $x_{0}$ is a root of Cayley tree and $C$ is a
constant which does not depend on $t$.  For some $n\in\mathbb{N}$
and $\sigma_n:x\in V_n\mapsto \sigma(x)$ we consider the
probability distribution $\mu^{(n)}$ on $\Omega_{V_n}$ defined by
\begin{equation}\label{e2}\mu^{(n)}(\sigma_n)=Z_n^{-1}\exp\left(-\beta H(\sigma_n)
+\sum_{x\in W_n}h_{\sigma(x),x}\right),\end{equation}\\
 where
$Z_n$ is the corresponding partition function:
\begin{equation}\label{e3}Z_n=\int\!\!\!...\!\!\!\!\!\int\limits_{\Omega^{\ast}_{V_{n-1}}} \exp\left(-\beta
H({\widetilde\sigma}_n) +\sum_{x\in
W_{n}}h_{{\widetilde\sigma}(x),x}\right)
\lambda^{\ast}_{V_{n-1}}({d\widetilde\s_n}),\end{equation}

Let $\sigma_{n-1}\in\Omega_{V_{n-1}}$ and
$\sigma_{n-1}\vee\omega_n\in\Omega_{V_n}$ is the concatenation of
$\sigma_{n-1}$ and $\omega_n.$ For $n\in \mathbb{N}$ we say that
the probability distributions $\mu^{(n)}$ are compatible if
$\mu^{(n)}$ satisfies the following condition:\vskip 0.1 truecm
\begin{equation}\label{e4}\int\!\!\!\!\!\!\!\!\!\int\limits_{\Omega_{W_n}\times\Omega_{W_n}}
\mu^{(n)}(\sigma_{n-1}\vee\omega_n)(\lambda_{W_n}\times
\lambda_{W_n})(d\omega_n)=
\mu^{(n-1)}(\sigma_{n-1}).\end{equation}\vskip 0.1 truecm

By Kolmogorov's extension theorem there exists a unique measure
$\mu$ on $\Omega_V$ such that, for any $n$ and
$\sigma_n\in\Omega_{V_n}$, $\mu \left(\left\{\sigma
|_{V_n}=\sigma_n\right\}\right)=\mu^{(n)}(\sigma_n)$. The measure
$\mu$ is called {\it splitting Gibbs measure} corresponding to
Hamiltonian (\ref{e1}) and function $x\mapsto h_x$, $x\neq x^0$.\\
Denote
\begin{equation}\label{e20}K(t,u,v)=\exp\left\{J_{3}\beta\xi_{1}\left(t,u,v\right)+J\beta\xi_{2}\left(u,v\right)
+J_{1}\beta\left(\xi_{3}\left(t,u\right)+\xi_{3}\left(t,v\right)\right)+\alpha\beta(u+v)\right\},\end{equation}

$$\underbrace{\Omega_{W_{n}}\times\Omega_{W_{n}}\times...\times\Omega_{W_{n}}}_{3\cdot
2^{p-1}}=\Omega^{(p)}_{W_{n}},\ \ \
\underbrace{\lambda_{W_{n}}\times\lambda_{W_{n}}\times...\times\lambda_{W_{n}}}_{3\cdot
2^{p-1}}=\lambda^{(p)}_{W_{n}}, \ n,p\in \mathbb{N},$$ and
$$f(t,x)=\exp(h_{t,x}-h_{0,x}), \ \ (t,u,v)\in [0,1]^{3},\ x\in
V\setminus\{x^{0}\}.$$\vskip 0.3truecm
\begin{lemma}\label{l1} Let $\w_{n}(\cdot): W_{n}\rightarrow
[0,1], \ n\geq 2$. Then the following equality holds:

$$\int\!...\!\!\!\!\!\int\limits_{\Omega^{(n)}_{W_n}}
\prod_{x\in W_{n-1}}\prod_{\rangle y,z\langle\in
S(x)}K\left(\w_{n-1}(x),\w_{n}(y),\w_{n}(z)\right)f(\w_{n}(y),y)f(\w_{n}(z),z)d(\w_{n}(y))d(\w_{n}(z))=$$
$$\prod_{x\in W_{n-1}}\prod_{\rangle y,z\langle\in
S(x)} \ \int\!\!\!\!
\int\limits_{\Omega_{W_n}^{(2)}}K\left(\w_{n-1}(x),\w_{n}(y),\w_{n}(z)\right)f(\w_{n}(y),y)f(\w_{n}(z),z)d(\w_{n}(y))d(\w_{n}(z)).$$
\end{lemma}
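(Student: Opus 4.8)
The plan is to read the left-hand side as the integral of a product whose factors depend on pairwise disjoint blocks of integration variables, and then to apply Fubini's theorem, which turns the integral of such a product into the product of the factorwise integrals appearing on the right.

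First I would identify the integration variables and their grouping. On the Cayley tree of order two every $x\in W_{n-1}$ (here $n\ge 2$, so $x\ne x^0$) has exactly two direct successors, and this two-element set $S(x)$ is precisely the unique second-neighbour pair $\rangle y,z\langle$ sitting inside $S(x)$, since $y$ and $z$ share the common nearest neighbour $x$ and both lie in $W_n$. The successor sets $\{S(x):x\in W_{n-1}\}$ are pairwise disjoint and exhaust $W_n$, so as $x$ ranges over $W_{n-1}$ and $\rangle y,z\langle$ ranges over the pairs in $S(x)$, each vertex of $W_n$ is met exactly once. Hence the collection $\{\w_n(w):w\in W_n\}$ of integration variables splits into the disjoint two-element blocks $\{\w_n(y),\w_n(z)\}$, one block per factor of the product.

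Next I would record the separation of variables. Since the integration is carried out only over the spins on $W_n$, the arguments $\w_{n-1}(x)$ are fixed constants, so each factor
$$K\left(\w_{n-1}(x),\w_{n}(y),\w_{n}(z)\right)f(\w_{n}(y),y)f(\w_{n}(z),z)$$
is a function of the two variables $\w_n(y),\w_n(z)$ of its own block alone; by the previous step distinct factors involve disjoint blocks. Because $\xi_1,\xi_2,\xi_3$ are bounded and $|h_{t,x}|<C$, the kernel $K$ of (\ref{e20}) and the factor $f=\exp(h_{t,x}-h_{0,x})$ are bounded, strictly positive and measurable, so the integrand is a bounded nonnegative measurable function on the finite product space $\Omega^{(n)}_{W_n}$ and is integrable against the product Lebesgue measure. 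Fubini's theorem then lets me integrate block by block, and the integral of a product of integrable functions in separated variables equals the product of their integrals, which is exactly the right-hand side.

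The argument is essentially bookkeeping together with a single application of Fubini, so I expect no analytic difficulty once integrability is noted. The one step that genuinely must be verified is the combinatorial claim that, on the order-two tree, each $S(x)$ is a single second-neighbour pair and these pairs partition $W_n$; this disjointness is exactly what makes the variable blocks separate and the factorization exact, and it is where the restriction $k=2$ is used (for larger $k$ a vertex could belong to several second-neighbour pairs inside the same $S(x)$, so the blocks would overlap and the clean product on the right would fail).
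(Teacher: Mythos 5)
Your proposal is correct and follows essentially the same route as the paper: the paper enumerates $W_{n-1}$ as $x_1,\ldots,x_{3\cdot 2^{n-2}}$ with $S(x_i)=\{y_i,z_i\}$, observes that the variables $\w_n(y_i),\w_n(z_i)$ belong to pairwise disjoint blocks (``independent configurations''), and peels off one blockwise integral $\zeta(\w_{n-1}(x_i),y_i,z_i)$ at a time --- which is just the iterated form of your single application of Fubini's theorem. Your explicit verification of integrability (boundedness of $K$ and $f$ from the boundedness of the $\xi_i$ and of $h$) and of the combinatorial fact that the blocks $S(x)$ partition $W_n$ on the order-two tree makes precise what the paper leaves implicit, but the underlying argument is the same.
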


\begin{proof} Denote elements of $W_{n-1}$ by $x_{i},$
 i.e.,
 $$x_{i}\in W_{n-1}, \
i\in\{1,2,...,3\cdot2^{n-2}\} , \bigcup_{i=1}^{\ \ \ \
3\cdot2^{n-2}}\{x_{i}\}=W_{n-1}\ \textrm{and} \
S(x_{i})=\{y_{i},z_{i}\}.$$ Then
$$\int\!...\!\!\!\!\!\int\limits_{\Omega^{(n)}_{W_n}}
\prod_{x\in W_{n-1}}\prod_{\rangle y,z\langle\in
S(x)}K\left(\w_{n-1}(x),\w_{n}(y),\w_{n}(z)\right)f(\w_{n}(y),y)f(\w_{n}(z),z)d(\w_{n}(y))d(\w_{n}(z))=$$
\begin{equation}\label{e7}\int\!...\!\!\!\!\!\int\limits_{\Omega^{(n)}_{W_n}}\prod_{i=1}^{\
\ \ 3\cdot2^{n-2}}
K\left(\w_{n-1}(x_{i}),\w_{n}(y_{i}),\w_{n}(z_{i})\right)f(\w_{n}(y_{i}),y_{i})f(\w_{n}(z_{i}),z_{i})d(\w_{n}(y_{i}))d(\w_{n}(z_{i})).\end{equation}\\
Since $\w_{n}(y_{i}), i\in\{1,2,...,3\cdot2^{n-2}\}$ and
$\w_{n}(z_{j}), j\in\{1,2,...,3\cdot2^{n-2}\}$ are independent
configurations, the RHS of (\ref{e7}) is equal to
$$\zeta(\w_{n-1}(x_{1}),y_{1},z_{1})\int\!...\!\!\!\!\!\int\limits_{\Omega^{(n-2)}_{W_n}}K\left(\w_{n-1}(x_{2}),\w_{n}
(y_{2}),\w_{n}(z_{2})\right)...K\left(\w_{n-1}(x_{3\cdot2^{n-2}}),
\w_{n}(y_{3\cdot2^{n-2}}),\w_{n}(z_{3\cdot2^{n-2}})\right)$$
$$\times f(\w_{n}(y_{2}),y_{2})f(\w_{n}(z_{2}),z_{2})...f(\w_{n}(y_{3\cdot2^{n-2}}),y_{3\cdot2^{n-2}})
f(\w_{n}(z_{3\cdot2^{n-2}}),z_{3\cdot2^{n-2}})d(\w_{n}(y_{2}))d(\w_{n}(z_{2}))...$$\vskip
0.1 truecm
\begin{equation}\label{e8}...d(\w_{n}(y_{3\cdot2^{n-2}}))d(\w_{n}(z_{3\cdot2^{n-2}})),\end{equation}
where

$$\zeta(\w_{n-1}(x_{i}),y_{i},z_{i})=
\int\!\!\int\limits_{\Omega_{W_{n}}^{(2)}}K\left(\w_{n-1}(x_{i}),\w_{n}(y_{i}),\w_{n}(z_{i})\right)f(\w_{n}(y_{i}),y_{i})
f(\w_{n}(z_{i}),z_{i})d(\w_{n}(y_{i}))d(\w_{n}(z_{i})).$$\\
Continuing this process, equation (\ref{e8}) can be written as

$$\prod_{i=1}^{\ \ \ 3\cdot2^{n-2}}
\zeta(\w_{n-1}(x_{i}),y_{i},z_{i})=$$
$$\prod_{i=1}^{\ \ \ 3\cdot2^{n-2}}\int\!\!\int\limits_
{\Omega_{W_{n}}^{(2)}}K\left(\w_{n-1}(x_{i}),\w_{n}(y_{i}),\w_{n}(z_{i})\right)f(\w_{n}(y_{i}),y_{i})
f(\w_{n}(z_{i}),z_{i})d(\w_{n}(y_{i}))d(\w_{n}(z_{i}))=$$
$$\prod_{x\in W_{n-1}}\prod_{>y,z<\in
S(x)} \ \int\!\!\!
\int\limits_{\Omega_{W_n}^{(2)}}K\left(\w_{n-1}(x),\w_{n}(y),\w_{n}(z)\right)f(\w_{n}(y),y)f(\w_{n}(z),z)d(\w_{n}(y))d(\w_{n}(z)).$$\\
This completes the proof.\end{proof}

The following statement describes conditions on $h_x$ guaranteeing
compatibility of the corresponding distributions
$\mu^{(n)}(\sigma_n).$

 \begin{thm} \label{th1}  The measure
$\mu^{(n)}(\sigma_n)$, $n=1,2,\ldots$ satisfies the consistency
condition (\ref{e4}) iff for any $x\in V\setminus\{x^0\}$ the
following equation holds:
\begin{equation}\label{e5} f(t,x)=\prod_{\rangle y,z\langle\in S(x)}
\frac{\int_0^1\int_0^1K(t,u,v)f(u,y)f(v,z)dudv}{\int_0^1\int_0^1K(0,u,v)f(u,y)f(v,z)dudv},
\end{equation} here $S(x)=\{y,z\},\ \langle y,x,z\rangle$
 is a ternary neighbor and $du=\l(du)$ is the Lebesgue measure. \end{thm}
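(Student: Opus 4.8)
The plan is to expand both sides of the consistency condition (\ref{e4}), split off the contribution of the newly added spins on $W_n$, factorize the resulting integral using Lemma \ref{l1}, and then extract (\ref{e5}) vertex by vertex by a normalization-and-independence argument; letting $n$ range recovers the equation for every $x\in V\setminus\{x^0\}$.

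First I would decompose the Hamiltonian. In $\Gamma_2$, passing from $V_{n-1}$ to $V_n$ adds exactly those interaction terms that involve the new vertices of $W_n$: for each $x\in W_{n-1}$ with $S(x)=\{y,z\}$, one triple $\langle y,x,z\rangle$, the second-neighbor bond $\rangle y,z\langle$, the two nearest-neighbor bonds $\langle x,y\rangle,\langle x,z\rangle$, and the external field on $y$ and $z$. By the definition (\ref{e20}), the Boltzmann weight of precisely these new terms is $K(\sigma_{n-1}(x),\omega_n(y),\omega_n(z))$, so that
$$\exp\left(-\beta H(\sigma_{n-1}\vee\omega_n)+\sum_{w\in W_n}h_{\omega_n(w),w}\right)=\exp\left(-\beta H(\sigma_{n-1})\right)\prod_{x\in W_{n-1}}K\bigl(\sigma_{n-1}(x),\omega_n(y),\omega_n(z)\bigr)\prod_{w\in W_n}e^{h_{\omega_n(w),w}}.$$
Writing $e^{h_{u,w}}=f(u,w)\,e^{h_{0,w}}$ and integrating over the new spins on $W_n$, I would invoke Lemma \ref{l1} to factorize the integral into a product over $x\in W_{n-1}$ of the two-dimensional integrals $\int_0^1\int_0^1 K(\sigma_{n-1}(x),u,v)f(u,y)f(v,z)\,du\,dv$, while the factors $e^{h_{0,w}}$ collect into a constant independent of $\sigma_{n-1}$.

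Comparing with $\mu^{(n-1)}(\sigma_{n-1})=Z_{n-1}^{-1}\exp(-\beta H(\sigma_{n-1})+\sum_{x\in W_{n-1}}h_{\sigma_{n-1}(x),x})$ and cancelling the common factor $\exp(-\beta H(\sigma_{n-1}))$, condition (\ref{e4}) becomes, with $t_x:=\sigma_{n-1}(x)$,
$$C\prod_{x\in W_{n-1}}\int_0^1\int_0^1 K(t_x,u,v)f(u,y)f(v,z)\,du\,dv=\prod_{x\in W_{n-1}}f(t_x,x),$$
for a constant $C$ built from $Z_n,Z_{n-1}$ and the $e^{h_{0,\cdot}}$ factors. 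Setting every $t_x=0$ and using $f(0,w)=1$ determines $C$ as the reciprocal of $\prod_x\int_0^1\int_0^1 K(0,u,v)f(u,y)f(v,z)\,du\,dv$. Since the variables $t_x$, $x\in W_{n-1}$, are free and independent, I would then fix a single vertex $x$ and put all remaining $t_{x'}=0$, so that every other factor equals $1$ on both sides; this isolates exactly equation (\ref{e5}) at $x$, with $S(x)=\{y,z\}$. As every $x\in V\setminus\{x^0\}$ lies in some $W_{n-1}$, ranging over $n$ yields (\ref{e5}) for all $x$, proving necessity. For sufficiency I would read the chain backwards: (\ref{e5}) at each $x\in W_{n-1}$ multiplies up to the product identity above, which reconstructs the integral equality and hence (\ref{e4}).

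The main obstacle is bookkeeping rather than depth: one must check that the decomposition of $H$ assigns each new triple, second-neighbor bond, nearest-neighbor bond and field term to exactly one factor $K(\sigma_{n-1}(x),\cdot,\cdot)$, with no omission or double counting, and that the normalizing constants $Z_n$, $Z_{n-1}$ and $e^{h_{0,w}}$ cancel correctly. Once the weight of the new terms is identified with $K$ and Lemma \ref{l1} supplies the factorization, the passage from the product identity to the pointwise equation (\ref{e5}) is the routine independence-and-normalization step.
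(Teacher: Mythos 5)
Your proposal is correct and follows essentially the same route as the paper: substitute the definition of $\mu^{(n)}$ into the compatibility condition, identify the Boltzmann weight of the newly added triple, second-neighbor, nearest-neighbor and field terms with the kernel $K$ of (\ref{e20}), factorize via Lemma \ref{l1}, and extract (\ref{e5}) pointwise, with sufficiency handled by the same proportionality-plus-normalization argument (the paper phrases it as $Z_{n-1}A_{n-1}=Z_n$). Your explicit step of determining the constant $C$ by setting all $t_{x'}=0$ and isolating one vertex is just a more carefully spelled-out version of the paper's terse ``Consequently'' in the necessity part, so there is no substantive difference.
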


\begin{proof} {\sl Necessity.}  Suppose that (\ref{e4}) holds; we want to
prove (\ref{e5}). Substituting (\ref{e2}) in (\ref{e4}) we obtain
that for any configurations $\sigma_{n-1}$: $x\in
V_{n-1}\mapsto\sigma_{n-1}(x)\in [0,1] $:

$$\frac{Z_{n-1}}{Z_n}\int\!...\!\!\!\!\!\int\limits_{\Omega^{(n)}_{W_n}}
\exp\left(J_{3}\beta\sum_{\langle y,x,z\rangle,x\in
W_{n-1}}\xi_{1}\left(\sigma_{n-1}(x),\sigma_{n}(y),\sigma_{n}(z)\right)\right)\times$$

$$\exp\left(J\beta\sum_{\rangle y,z\langle\in
W_{n}}\xi_{2}(\sigma_{n}(y),\sigma_{n}(z))+J_{1}\beta\sum_{\langle
x,y\rangle, x\in
W_{n-1}}\xi_{3}\left(\sigma_{n-1}(x),\sigma_{n}(y)\right)\right)\times$$\vskip
0.3 truecm
$$\exp\left(\alpha\beta\sum_{y\in S(x), x\in
W_{n-1}}\sigma_{n}(y)+\sum_{y\in S(x), x\in
W_{n-1}}h_{\w_{n}(y),y}\right)\l^{(n)}_{W_n}(d\w_n)=\exp\left(\sum_{x\in
W_{n-1}}h_{\sigma_{n-1}(x),x}\right),$$\\
where $\omega_n$: $x\in W_n\mapsto\omega_n(x)$. From the last
equality we get:

$$\frac{Z_{n-1}}{Z_n}\int\!...\!\!\!\!\!\int\limits_{\Omega^{(n)}_{W_n}}
\prod_{x\in W_{n-1}}\prod_{\rangle y,z\langle\in
S(x)}\exp\left(J_{3}\beta\sum_{\langle
y,x,z\rangle}\xi_{1}\left(\sigma_{n-1}(x),\w_{n}(y),\w_{n}(z)\right)\right)\times$$
$$\exp\left(J\beta\sum_{\rangle y,z\langle}\xi_{2}(\w_{n}(y),\w_{n}(z))+
J_{1}\beta\cdot\xi_{3}(\sigma_{n-1}(x),\w_{n}(y))+J_{1}\beta\cdot\xi(\sigma_{n-1}(x),\w_{n}(z))\right)\times$$
$$\exp\left(\alpha\beta(\w_{n}(y)+\w_{n}(z))+h_{\w_{n}(y),y}+h_{\w_{n}(z),z}\right)d(\w_{n}(y))d(\w_{n}(z))=\exp\left(\sum_{x\in
W_{n-1}}h_{\sigma_{n-1}(x),x}\right).$$\\
By Lemma \ref{l1}
$$\frac{Z_{n-1}}{Z_n}\prod_{x\in W_{n-1}}\prod_{\rangle y,z\langle\in
S(x)}\int\!\!\!\int\limits_{\Omega^{(2)}_{W_n}}
\exp\left(J_{3}\beta\sum_{<y,x,z>}\xi_{1}\left(\sigma_{n-1}(x),\w_{n}(y),\w_{n}(z)\right)\right)\times$$
$$\exp\left(J\beta\sum_{\rangle y,z\langle}\xi_{2}(\w_{n}(y),\w_{n}(z))+
J_{1}\beta\cdot\xi_{3}(\sigma_{n-1}(x),\w_{n}(y))+J_{1}\beta\cdot\xi(\sigma_{n-1}(x),\w_{n}(z))\right)\times$$
$$\exp\left(\alpha\beta(\w_{n}(y)+\w_{n}(z))+h_{\w_{n}(y),y}+h_{\w_{n}(z),z}\right)d(\w_{n}(y))d(\w_{n}(z))=\exp\left(\sum_{x\in
W_{n-1}}h_{\sigma_{n-1}(x),x}\right).$$\\
 Consequently, for any $\sigma_{n-1}(x)\in [0,1]$, $f(\sigma_{n-1}(x),x)$ is equal to\\
$$\prod_{\rangle y,z\langle\in S(x)}
\frac{\int\!\!\int_{\Omega_{W_{n}}^{(2)}}K(\sigma_{n-1}(x),\w_{n}(y),\w_{n}(z))f(\w_{n}(y),y)f(\w_{n}(z),z)d(\w_{n}(y))d(\w_{n}(z))}
{\int\!\!\int_{\Omega_{W_{n}}^{(2)}}K(0,\w_{n}(y),\w_{n}(z))f(\w_{n}(y),y)f(\w_{n}(z),z)d(\w_{n}(y))d(\w_{n}(z))}.$$\\
 If we denote $\w_{n}(y)=u,\ \w_{n}(z)=v,\ \sigma_{n-1}(x)=t$ it will imply (\ref{e5}).\vskip 0.3truecm

{\sl Sufficiency.} Suppose that (\ref{e5}) holds. It is equivalent
to the representations
\begin{equation}\label{e9}\prod_{\rangle y,z\langle\in
S(x)}\int\!\!\int\limits_{\Omega_{W_{n}}^{(2)}}
K(t,u,v)\exp(h_{u,y}+h_{v,z})dudv= a(x)\exp\,(h_{t,x}), \ t\in
[0,1] \end{equation}
 for some function
$a(x)>0, x\in V.$ We have
$${\rm LHS \ \ of \ \  (\ref{e5})}=\frac{1}{Z_n}\exp(-\b H(\s_{n-1}))
\l^{\ast}_{V_{n-2}}(d(\s_{n-1}))\times$$
$$\prod_{x\in W_{n-1}}
\prod_{\rangle y,z\langle\in
S(x)}\int\!\!\!\int\limits_{\Omega_{W_{n}}^{(2)}}\
exp\left(J_{3}\beta\sum_{\langle
y,x,z\rangle}\xi_{1}\left(\sigma_{n-1}(x),u,v\right)+J\beta\sum_{\rangle
y,z\langle}\xi_{2}(u,v)+
J_{1}\beta\cdot\xi_{3}(\sigma_{n-1}(x),u)\right)$$
\begin{equation}\label{e10}\times\exp\left(J_{1}\beta\cdot\xi_{3}(\sigma_{n-1}(x),v)+
\alpha\beta(u+v)+h_{u,y}+h_{v,z}\right)dudv=\exp\left(\sum_{x\in
W_{n-1}}h_{\sigma_{n-1}(x),x}\right).\end{equation}\\
  Let $A_n(x)=\prod_{x \in W_{n-1}}
a(x)$, then from (\ref{e9}) and (\ref{e10}) we get
\begin{equation}\label{e11}{\rm RHS\ \  of\ \  (\ref{e10}) }=
\frac{A_{n-1}}{Z_n}\exp(-\b H(\s_{n-1}))
\lambda^{\ast}_{V_{n-2}}(d\s)\prod_{x\in W_{n-1}}
h_{\s_{n-1}(x),x}.\end{equation}\\
Since $\mu^{(n)}$, $n \in \mathbb{N}$ is a probability
distribution, we should have

$$\int\!\!...\!\!\!\!\!\!\int\limits_{\Omega^{\ast}_{V_{n-2}}}\lambda^{\ast}_{V_{n-2}}(d\s_{n-1})
\int\!\!\!\int\limits_{\Omega_{W_{n}}^{(2)}}
\lambda^{(2)}_{W_{n}}(d\omega_{n}) \mu^{(n)} (\sigma_{n-1},
\omega_n) = 1 .$$
 Hence from (\ref{e11})  we get $Z_{n-1}A_{n-1}=Z_n$, and
(\ref{e5}) holds. Theorem is proved. \end{proof}

Note that in all of papers \cite{ehr2013}-\cite{eh2015},
\cite{new1}, \cite{re}-\cite{rh2015} were considered the
Hamiltonian (\ref{e1}) for the case $J_{3}=J=\alpha=0$ and
$J_{1}\neq 0$ and it was proved that: The probability
distributions $\mu^{(n)}(\sigma_n)$, $n=1,2,\ldots$ are compatible
iff for any $x\in V\setminus\{x^0\}$ the following equation holds:

\begin{equation}\label{e17}
f(t,x)=\prod_{y\in
S(x)}\frac{\int_0^1\exp\left\{J_{1}\beta\xi_{3}(t,u)\right\}
f(u,y)du}{\int_0^1\exp\left\{J_{1}\beta\xi_{3}(0,u)\right\}
f(u,y)du},\end{equation}
 where $f(t,x)=\exp(h_{t,x}-h_{0,x}), \
t\in [0,1],\ x\in V.$

 Equation (\ref{e17}) was first considered in \cite{re}. The
following remark gives us equation (\ref{e5}) is coincide with
equation (\ref{e17}) in the case $J_{3}=J=\alpha=0, J_{1}\neq 0$.

\begin{rk}\label{rem} If $J_{3}=J=\alpha=0$ and
$J_{1}\neq 0$ then (\ref{e5}) is equivalent to (\ref{e17}).
\end{rk}
\begin{proof} For $J_{3}=J=\alpha=0$ and $J_{1}\neq 0$ one get
$K(t,u,v)=\exp\left\{J_{1}\beta\left(\xi_{3}\left(u,t\right)+\xi_{3}\left(v,t\right)\right)\right\}.$
Then (\ref{e5}) can be written as $$f(t,x)=\prod_{\rangle
y,z\langle\in S(x)}
\frac{\int_0^1\int_0^1\exp\left\{J_{1}\beta\left(\xi_{3}\left(t,u\right)+\xi_{3}\left(t,v\right)\right)\right\}
f(u,y)f(v,z)dudv}{\int_0^1\int_0^1\exp\left\{J_{1}\beta\left(\xi_{3}\left(0,u\right)+\xi_{3}\left(0,v\right)\right)\right\}f(u,y)f(v,z)dudv}=$$\vskip0.3truecm

\begin{equation}\label{e16}\prod_{\rangle y,z\langle\in
S(x)}\frac{\int_0^1\exp\left\{J_{1}\beta\xi_{3}(t,u)\right\}
f(u,y)du \cdot \int_0^1\exp\left\{J_{1}\beta\xi_{3}(t,v)\right\}
f(v,z)dv}{\int_0^1\exp\left\{J_{1}\beta\xi_{3}(0,u)\right\}
f(u,y)du \cdot \int_0^1\exp\left\{J_{1}\beta\xi_{3}(0,v)\right\}
f(v,z)dv}.\end{equation}\\
 Since $\rangle y,z\langle=S(x)$ equation (\ref{e16}) is equivalent to
 (\ref{e17}).
\end{proof}

{\it \textbf{The Ising model with competing interactions}.} \ It's
known that if $\xi_{1}(x,y,z)=xyz,\ \xi_{i}(x,y)=xy,\ i\in\{2,3\}$
then model (\ref{e1}) become the Ising model with uncountable set
of spin values. For the case $J_{1}=J_{3}=0$ and $J\neq 0,\
\alpha\in\mathbb{R}$ it's clear that (\ref{e5}) is equivalent
to\\
$$f(t,x)=\prod_{\rangle y,z\langle\in S(x)}
\frac{\int_0^1\int_0^1\exp\{J\beta
uv+\alpha\beta(u+v)\}f(u,y)f(v,z)dudv}{\int_0^1\int_0^1\exp\{J\beta
uv+\alpha\beta(u+v)\}f(u,y)f(v,z)dudv}=1.$$\\
As a result, equation (\ref{e5}) has the unique solution
$f(t,x)=1,\ t\in [0,1],\ x\in V$ for any $\beta>0$. Consequently
we get following Proposition.

\begin{pro}\label{p2} Let $J_{1}=J_{3}=0$ and $J\neq 0,\ \alpha\in\mathbb{R}$.
Then the Ising model with uncountable set of spin values on Cayley
tree of order two has unique splitting Gibbs measures for any
$J\in \mathbb{R},$ and any $\beta>0$.\end{pro}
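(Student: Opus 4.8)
The plan is to invoke Theorem \ref{th1}, which identifies splitting Gibbs measures with solutions of the integral equation (\ref{e5}); uniqueness of the measure for every $\beta>0$ will follow once I show that (\ref{e5}) has exactly one solution. First I would specialize the kernel (\ref{e20}) to the Ising model, where $\xi_{1}(x,y,z)=xyz$ and $\xi_{i}(x,y)=xy$, in the regime $J_{1}=J_{3}=0$. The terms carrying $J_{3}$ and $J_{1}$ then vanish, so
\begin{equation*}
K(t,u,v)=\exp\{J\beta uv+\alpha\beta(u+v)\}.
\end{equation*}
The decisive feature is that this kernel does not depend on $t$: the triple interaction (through $\xi_{1}$) and the nearest-neighbor interaction (through $\xi_{3}$) were the only sources of $t$-dependence in $K$, and both have been switched off.

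Next I would substitute this kernel into (\ref{e5}). Since $K(t,u,v)=K(0,u,v)$ for every $t\in[0,1]$, the numerator and denominator of each factor in the product over $\rangle y,z\langle\in S(x)$ are literally the same integral, so each factor equals $1$ regardless of the values of $f(u,y)$ and $f(v,z)$. Hence the right-hand side of (\ref{e5}) is identically $1$, which forces $f(t,x)=1$ for all $t\in[0,1]$ and all $x\in V\setminus\{x^{0}\}$; conversely $f\equiv 1$ trivially solves (\ref{e5}). Thus (\ref{e5}) admits the single solution $f\equiv 1$, independently of $J\neq 0$, $\alpha\in\mathbb{R}$, and $\beta>0$.

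Finally I would translate the uniqueness of this solution back to the level of measures. Recalling $f(t,x)=\exp(h_{t,x}-h_{0,x})$, the solution $f\equiv 1$ means that $h_{t,x}$ is independent of $t$; the residual additive freedom in $h_{0,x}$ is absorbed into the partition functions $Z_{n}$ and does not affect $\mu^{(n)}$. By Theorem \ref{th1} this boundary law yields a compatible family $\{\mu^{(n)}\}$, and Kolmogorov's extension theorem then produces a splitting Gibbs measure. Since (\ref{e5}) has no other solution, there is no other splitting Gibbs measure, which is the asserted uniqueness.

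The main point to emphasize is that there is no genuine analytic obstacle here: the entire content lies in the observation that setting $J_{1}=J_{3}=0$ strips all $t$-dependence out of the kernel $K$, collapsing the integral equation to the trivial identity. This is precisely what distinguishes the present case from the substantive nearest-neighbor case $J_{1}\neq 0$ of the cited literature, where the genuine $t$-dependence of $K$ turns (\ref{e5}) into a nontrivial nonlinear fixed-point problem that can exhibit phase transitions.
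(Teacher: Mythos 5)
Your proposal is correct and follows the same route as the paper: with $J_{1}=J_{3}=0$ the kernel reduces to $K(t,u,v)=\exp\{J\beta uv+\alpha\beta(u+v)\}$, which is independent of $t$, so each factor in (\ref{e5}) has identical numerator and denominator, forcing the unique solution $f\equiv 1$ and hence a unique splitting Gibbs measure. Your additional remarks (explicitly identifying $\xi_{1}$ and $\xi_{3}$ as the only sources of $t$-dependence, and spelling out the translation back to measures via Theorem \ref{th1}) merely make explicit what the paper leaves implicit.
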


{\it \textbf{The Potts Model with competing interactions}.} \ Put
$J_{3}=0$ and $J, J_{1}, \alpha \in \mathbb{R}$. If
$\xi_{i}(x,y)=\delta(x,y),\ i\in\{2,3\}$ ($\delta$ is the
Kronecker's symbol) then the model (\ref{e1}) become Potts model.
For any $t\in [0,1],\ x\in V$ it's easy to see that
$$\int_{0}^{1}\!\!\int_{0}^{1}\exp\{J\beta\delta(u,v)
+J_{1}\beta(\delta(u,t)+\delta(v,t))+\alpha\beta(u+v)\}dudv=\ \ \
\ \  \ \ \ \ \ \ \ \ \  \ \ \ \ \ \ \ \ \  $$
$$  \ \ \ \ \  \ \ \ \ \ \ \ \ \ \ \ \ \ \ \ \ \ \int_{0}^{1}\!\!\int_{0}^{1}\exp\{J\beta\delta(u,v)
+J_{1}\beta(\delta(u,0)+\delta(v,0))+\alpha\beta(u+v)\}dudv.$$\\
Hence in this case the equation has the unique solution $f(t,x)=1$
and we can conclude that

\begin{pro}\label{p3} The Potts model with uncountable set of spin
values on Cayley tree of order two has unique splitting Gibbs
measure for any $J_{3}\neq 0$ and $J, J_{1}, \alpha \in
\mathbb{R}, \ \beta>0.$
\end{pro}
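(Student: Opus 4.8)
The plan is to invoke Theorem~\ref{th1}, by which splitting Gibbs measures are in one-to-one correspondence with bounded positive measurable solutions $f$ of the integral equation~(\ref{e5}) (a solution determines a compatible family $\{\mu^{(n)}\}$, hence via the Kolmogorov extension a unique measure $\mu$, and conversely). It therefore suffices to show that~(\ref{e5}) has exactly one solution, namely $f(t,x)\equiv 1$. First I would specialize the kernel~(\ref{e20}) to the Potts setting of the preceding paragraph, i.e.\ with the triple term absent and $\xi_2(u,v)=\delta(u,v)$, $\xi_3(t,u)=\delta(t,u)$, which gives
$$K(t,u,v)=\exp\left\{J\beta\,\delta(u,v)+J_1\beta\left(\delta(t,u)+\delta(t,v)\right)+\alpha\beta(u+v)\right\}.$$

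The crux of the argument — and really its only substantive step — is that the Kronecker symbol is invisible to integration against Lebesgue measure. For every fixed $t\in[0,1]$ the three sets $\{(u,v):u=v\}$, $\{(u,v):u=t\}$ and $\{(u,v):v=t\}$ each have two-dimensional Lebesgue measure zero, so their union is $\lambda\times\lambda$-null. Off that null set all three Kronecker deltas vanish, whence
$$K(t,u,v)=\exp\{\alpha\beta(u+v)\}=K(0,u,v)\qquad\text{for }\lambda\times\lambda\text{-a.e. }(u,v),$$
and the right-hand side no longer depends on $t$.

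The conclusion is then immediate. For any bounded measurable $f$ and every fixed $t$, the numerator and the denominator of each factor in~(\ref{e5}), namely $\int_0^1\!\int_0^1 K(t,u,v)f(u,y)f(v,z)\,du\,dv$ and $\int_0^1\!\int_0^1 K(0,u,v)f(u,y)f(v,z)\,du\,dv$, have integrands agreeing almost everywhere and are therefore equal. Hence every factor of the product over $\rangle y,z\langle\,\in S(x)$ equals $1$, the right-hand side of~(\ref{e5}) is identically $1$, and so $f(t,x)\equiv 1$ is the unique solution. By Theorem~\ref{th1} this yields a unique splitting Gibbs measure for all values of $J,J_1,\alpha$ and every $\beta>0$.

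I expect no genuine obstacle beyond the null-set observation itself; the remaining manipulations are routine. It is worth emphasizing where this observation bites: in the finite- or countable-spin Potts model the reference measure is atomic, the diagonal carries positive mass, and the $\delta$-interactions do influence the recursion, which is why several Gibbs measures can coexist there; in the uncountable ($[0,1]$, Lebesgue) setting those interactions are washed out entirely. For the same reason the value of $J_3$ is immaterial: any Kronecker-type triple term $\xi_1(t,u,v)$ is, for fixed $t$, supported on a Lebesgue-null subset of $[0,1]^2$ and drops out by the identical argument, so uniqueness persists regardless of $J_3$.
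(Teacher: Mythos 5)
Your proof is correct and takes essentially the same approach as the paper: both rest on the single observation that the Kronecker-delta terms are supported on Lebesgue-null subsets of $[0,1]^2$, so that $K(t,u,v)=K(0,u,v)$ for $\lambda\times\lambda$-a.e.\ $(u,v)$, every factor on the right-hand side of (\ref{e5}) equals $1$, and $f(t,x)\equiv 1$ is the unique solution, giving a unique splitting Gibbs measure via Theorem \ref{th1}. If anything, you make explicit what the paper leaves at ``it's easy to see'' (the a.e.\ identity with the factors $f$ included, and the remark that a Kronecker-type triple term likewise washes out, which accounts for the $J_{3}$ clause in the statement).
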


\begin{rk} For $J_{3}\cdot J_{1}\cdot J\cdot\alpha\neq 0$ is there a kernel
$K(t,u,v)>0$ of equation (\ref{e5}) when the equation has at least
two solutions? This is an open problem.
\end{rk}
\section{Periodic Gibbs measure of the model (\ref{e1})}

In this section we consider periodic Gibbs measures of the model
(\ref{e1}) and give a result (Theorem \ref{thm2}) about periodic
Gibbs measures for the model.

Let $G_{k}$ be a free product of $k+1$ cyclic groups of the second
order with generators $a_{1},a_{2},...a_{k+1},$ respectively.
There exist bijective maps from the set of vertices $V$ of the
Cayley tree $\Gamma_{k}$ onto the group $G_{k}$ (see \cite{uar}).
That's why we sometimes replace $V$ with $G_{k}$.

 Let $S_{1}(x)=\{y\in G_{k}:  \langle x,y\rangle\}$ be the collection of all
neighbors to the word $x\in G_{k}.$ Let $G^\ast$ be a normal
subgroup of index $r$ in $G_{k},$ and let
$G_{k}/G^\ast=\{G^\ast_{0}, G^\ast_{1}, ..., G^\ast_{r-1}\}$ be a
quotient group, with the coset $G^\ast_{0}=G^\ast.$ In addition,
let $q_{i}(x)=|S_{1}(x)\bigcap G^\ast_{i}|, i=0,1,...,r-1,$ and
$Q(x)=(q_{0}(x), q_{1}(x),...,q_{r-1}(x))$ where $x\in G_{k},$ \,\
$q_{i}(G^\ast_{0})=q_{i}(e)=|\{j : a_{j}\in G^\ast_{i}\}|, \,\
Q(G^\ast_{0})=(q_{0}(G^\ast_{0}),...,q_{n-1}(G^\ast_{0})).$

\begin{defn}\label{d1} Let $G^\ast$ be a subgroup of $G_{k}, k\geq 1$. We say that a
function $h_{x}, x\in G_{k}$ is $K$-periodic if $h_{yx}=h_{x}$ for
all $x\in G_{k}$, $y\in K$. A $G_{k}$- periodic function $h$ is
called $translation$-$invariant.$\end{defn}

\begin{defn}\label{d1} A Gibbs measure is called $G^\ast$- periodic if it corresponds to a $G^\ast$-
periodic function $h$.
\end{defn}

\begin{pro}\label{p1} \cite{uar} For any $x\in G_{k},$ there exists a permutation
$\pi_{x}$ of the coordinates of the vector $Q(G^\ast_{0})$
such that $\pi_{x}(Q(G^\ast_{0}))=Q(x).$\\
Let $G_{k}^{(2)}=\{x\in G_{k}: $ the length of word $x$ is
even$\}.$\end{pro}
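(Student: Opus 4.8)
The plan is to read the combinatorics of $Q(x)$ directly off the quotient group $G_{k}/G^\ast$. Under the identification of the Cayley tree with $G_{k}$, the neighbors of a word $x$ are exactly the right translates by the generators,
\[
S_{1}(x)=\{xa_{1},xa_{2},\dots,xa_{k+1}\},
\]
since each $a_{j}$ is an involution, so that $y=xa_{j}\iff x=ya_{j}$. Thus $q_{i}(x)=|S_{1}(x)\cap G^\ast_{i}|$ merely counts how many of the $k+1$ products $xa_{j}$ land in the coset $G^\ast_{i}$.

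Next I would pass to the quotient. Because $G^\ast$ is normal of index $r$, the cosets $G^\ast_{0},\dots,G^\ast_{r-1}$ form the finite group $G_{k}/G^\ast$; write $\bar g$ for the image of $g\in G_{k}$ and let $\bar g_{i}$ denote the element corresponding to $G^\ast_{i}$. Membership $xa_{j}\in G^\ast_{i}$ is then equivalent to $\bar x\,\bar a_{j}=\bar g_{i}$, i.e. to $\bar a_{j}=\bar x^{-1}\bar g_{i}$, so that
\[
q_{i}(x)=\bigl|\{\,j:\ \bar a_{j}=\bar x^{-1}\bar g_{i}\,\}\bigr|.
\]

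Now comes the key point. Left multiplication by $\bar x^{-1}$ is a bijection of the finite group $G_{k}/G^\ast$, hence it induces a permutation $\pi_{x}$ of the index set $\{0,1,\dots,r-1\}$ determined by $\bar g_{\pi_{x}(i)}=\bar x^{-1}\bar g_{i}$. Comparing with the definition $q_{m}(G^\ast_{0})=q_{m}(e)=|\{\,j:\ \bar a_{j}=\bar g_{m}\,\}|$ gives at once
\[
q_{i}(x)=q_{\pi_{x}(i)}(G^\ast_{0}),
\]
which says precisely that $Q(x)$ is obtained from $Q(G^\ast_{0})$ by permuting its coordinates through $\pi_{x}$. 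This is the assertion of the proposition, and it simultaneously explains why the multiset of entries of $Q(x)$ is independent of $x$.

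I do not expect a genuine obstacle here: the entire content is that left multiplication by a fixed element permutes the cosets and that $x\mapsto\bar x$ is a homomorphism. The only matters demanding care are bookkeeping ones: fixing once and for all whether neighbors are the right translates $xa_{j}$ and whether $G^\ast_{i}$ is a left or right coset, so that the induced action on indices is consistently "left multiplication by $\bar x^{-1}$"; and verifying that $\pi_{x}$ is well defined independently of the chosen representatives $g_{i}$, which is immediate since the defining relation $\bar g_{\pi_{x}(i)}=\bar x^{-1}\bar g_{i}$ lives entirely in the quotient group. (The choice between $\pi_{x}$ and $\pi_{x}^{-1}$ is likewise only a matter of the convention for how a permutation acts on the coordinates of a vector.)
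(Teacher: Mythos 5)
Your proof is correct, and it matches the standard argument for this fact: the paper itself states Proposition \ref{p1} without proof (citing \cite{uar}), and the proof in that reference proceeds exactly as you do --- neighbors of $x$ are the right translates $xa_j$, and left multiplication by $\bar{x}^{-1}$ in the finite quotient $G_k/G^\ast$ (well defined by normality) permutes the cosets, yielding $q_i(x)=q_{\pi_x(i)}(G^\ast_0)$ and hence $\pi_x(Q(G^\ast_0))=Q(x)$. Your closing remarks on the left/right coset convention and on independence of coset representatives address the only points where care is needed, so there is no gap.
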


Put
\begin{equation}\label{e12}
\Re^+=\left\{\vartheta_{1}(z_{1},z_{2})\vartheta_{2}(z_{1},z_{3})\
|\ \vartheta_{i}\in C\left([0,1]\times[0,1]\right), \
\vartheta_{i}(\cdot,\cdot)>0, \ i\in\{1,2\} \right\}.
\end{equation}

\begin{thm}\label{thm2} Let $K(z_{1},z_{2},z_{3})\in\Re^+$ and $G^\ast$ be a normal subgroup of finite index
in $G_{k}.$ Then each $G^\ast$- periodic Gibbs measure for the
model (\ref{e1}) is either $translation$-$invariant$ or
$G_{k}^{(2)}- periodic.$\end{thm}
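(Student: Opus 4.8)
The plan is to translate $G^\ast$-periodicity into a finite system for the function $f(t,x)=\exp(h_{t,x}-h_{0,x})$, to use the product structure of $K\in\Re^+$ to decouple the two successors, and then to exploit the fact that $G_k$ is generated by involutions to force the period down to $2$. Throughout I work with the order-two tree that the class $\Re^+$ (a product of two factors) corresponds to, so $S(x)=\{y,z\}$ and $G_k$ is generated by three involutions $a_1,a_2,a_3$.

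First I would record what periodicity means for $f$. If $h$ is $G^\ast$-periodic then $f(\cdot,x)$ depends on $x$ only through its coset $\overline{x}\in Q:=G_k/G^\ast$, a finite group of order $r$ generated by the involutions $\overline{a_1},\overline{a_2},\overline{a_3}$; write $f_i(\cdot):=f(\cdot,x)$ for $x\in G^\ast_i$. Since $K\in\Re^+$ we have $K(t,u,v)=\vartheta_1(t,u)\,\vartheta_2(t,v)$, and because the kernel (\ref{e20}) is symmetric in its last two arguments for the models considered, this forces $\vartheta_1(t,u)=c(t)\vartheta_2(t,u)$, so that both the numerator and the denominator of (\ref{e5}) factor over the two successors. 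Setting, for a coset $j$,
\[
u^{j}(t)=\frac{\int_0^1\vartheta_2(t,w)f_j(w)\,dw}{\int_0^1\vartheta_2(0,w)f_j(w)\,dw}>0,
\]
equation (\ref{e5}) for a vertex $x\in G^\ast_i$ with $S(x)=\{y,z\}$ becomes $f_i(t)=\frac{c(t)}{c(0)}\,u^{\overline{y}}(t)\,u^{\overline{z}}(t)$, where $\overline{y},\overline{z}$ are the cosets of the two successors. (If $\vartheta_1\neq c\,\vartheta_2$ one keeps two transfer functions $u_1,u_2$; the argument below is unchanged.)

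Next I would determine the successor cosets combinatorially. A vertex $x\in G^\ast_i$ ending in the letter $a_m$ has successors $xa_p$ ($p\neq m$), lying in cosets $i\overline{a_p}$. The key input is that every coset $G^\ast_i$ contains, for each generator $a_m$, a word ending in $a_m$: starting from any $w\in G^\ast_i$ one appends the reduced block $(a_sa_m)^{o}$, where $s\neq m$ and $o$ is the order of $\overline{a_s}\,\overline{a_m}$ in $Q$, obtaining a reduced word in $G^\ast_i$ that ends in $a_m$. Applying (\ref{e5}) to vertices of $G^\ast_i$ ending in $a_1$, $a_2$, $a_3$ respectively gives the three identities $f_i=\frac{c(t)}{c(0)}u^{i\overline{a_2}}u^{i\overline{a_3}}=\frac{c(t)}{c(0)}u^{i\overline{a_1}}u^{i\overline{a_3}}=\frac{c(t)}{c(0)}u^{i\overline{a_1}}u^{i\overline{a_2}}$. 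Since all factors are strictly positive, dividing these against one another cancels the common prefactor and yields $u^{i\overline{a_1}}=u^{i\overline{a_2}}=u^{i\overline{a_3}}$ for every coset $i$.

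Finally I would read off the dichotomy. The relations above say exactly that $u^{g}=u^{g\,\overline{a_p}\,\overline{a_q}}$ for all $g\in Q$ and all $p,q$, so $u$ is invariant under right multiplication by the subgroup $\langle \overline{a_p}\,\overline{a_q}\rangle=\overline{G_{k}^{(2)}}$, the image of $G_{k}^{(2)}$ in $Q$, which has index $1$ or $2$. Hence $u^{j}$, and therefore $f_i(t)=\frac{c(t)}{c(0)}\big(u^{i\overline{a_1}}(t)\big)^{2}$, depends only on the parity class $\overline{x\,G_{k}^{(2)}}$ of $x$; that is, $f$ is constant on cosets of $G^\ast G_{k}^{(2)}\supseteq G_{k}^{(2)}$. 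If $\overline{G_{k}^{(2)}}=Q$ then $f$ does not depend on $x$ and the measure is translation-invariant; otherwise $f$ takes two values according to the parity of the word-length, so the measure is $G_{k}^{(2)}$-periodic. I expect the main obstacle to be the combinatorial lemma that each coset realizes every last letter, since that is precisely what makes the three instances of (\ref{e5}) simultaneously available and legitimizes the positivity/division step; the factorization from $K\in\Re^+$ and the closing group-theoretic reduction are then routine.
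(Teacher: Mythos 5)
Your proposal is correct in substance but takes a genuinely different route from the paper's. The paper argues locally at a vertex $x$: writing $S_{1}(x)=\{x_{\downarrow},y,z\}$, it invokes Proposition \ref{p1} to exchange each successor with the parent $x_{\downarrow}$ inside equation (\ref{e5}), uses the factorization $K=K_{1}K_{2}$ to split the double integral into the one-variable identities (\ref{e14})--(\ref{e15}), and thereby shows that $h(\cdot,x)$ is obtained from $h(\cdot,x_{\downarrow})$ by the single universal map $A$ of (\ref{e22}); iterating along the tree gives one function $h_{1}$ on all neighbors of $x$ and $h_{2}$ at $x$, whence the alternating structure. You instead work entirely in the finite quotient $Q=G_{k}/G^{\ast}$: your combinatorial lemma that every coset contains reduced words ending in each generator (proved by appending $(a_{s}a_{m})^{o}$ with $o$ the order of $\overline{a_{s}}\,\overline{a_{m}}$ --- this is exactly where finiteness of the index enters, replacing the paper's rather vague appeal to Proposition \ref{p1}) lets you write (\ref{e5}) three times within one coset, cancel strictly positive factors, and conclude $u^{i\overline{a_{1}}}=u^{i\overline{a_{2}}}=u^{i\overline{a_{3}}}$, i.e.\ invariance of the transfer function under the image of $G_{k}^{(2)}$ in $Q$, whose index is $1$ or $2$. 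Your version buys an explicit, checkable group-theoretic skeleton where the paper's parent-swap step is left informal; the paper's version buys the fixed-point relations $h_{1}=A(h_{2})$, $h_{2}=A(h_{1})$, which it later exploits in (\ref{e18})--(\ref{e19}). One point you should tighten: the reduction $\vartheta_{1}=c(t)\vartheta_{2}$ is not forced by the definition (\ref{e12}) of $\Re^{+}$, which imposes no symmetry in the last two arguments, and your fallback with two transfer functions $u_{1},u_{2}$ is not literally ``unchanged'': with a fixed assignment of the two successors to the two slots of the kernel, your three identities yield only the partial equalities $u_{1}^{i\overline{a_{1}}}=u_{1}^{i\overline{a_{2}}}$ and $u_{2}^{i\overline{a_{2}}}=u_{2}^{i\overline{a_{3}}}$, giving invariance of each $u_{j}$ under a single element $\overline{a_{p}}\,\overline{a_{q}}$ rather than under all of the image of $G_{k}^{(2)}$. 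You recover the full system of six identities (and hence your conclusion) only by observing that the pair $\rangle y,z\langle$ in (\ref{e5}) is unordered, so both labelings must hold simultaneously --- an implicit symmetry assumption that the paper itself also makes silently in deriving (\ref{e14})--(\ref{e15}), so this is a flaw shared with, not worse than, the published proof.
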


\begin{proof} By Theorem \ref{th1}
$$f(\sigma_{n-1}(x),x)=\prod_{\rangle y,\ z\langle\in S(x)}
\frac{\int\!\!\int_{\Omega^{(2)}_{W_n}}K\left(\sigma_{n-1}(x),\w_{n}(y),\w_{n}(z)\right)
f(\w_{n}(y),y)f(\w_{n}(z),z)d(\w_{n}(y))d(\w_{n}(z))}{\int\!\!\int_{\Omega^{(2)}_{W_n}}K\left(0,\w_{n}(y),\w_{n}(z)\right)
f(\w_{n}(y),y)f(\w_{n}(z),z)d(\w_{n}(y))d(\w_{n}(z))}$$\\
Let $\{x_\downarrow, y, z\}=S_{1}(x).$ From Proposition \ref{p1}

  $$f(\sigma_{n-1}(x),x)=\prod_{\rangle y,\ z\langle\in S(x)}\frac{\int\!\!\int_{\Omega^{(2)}_{W_n}}K\left(\sigma_{n-1}(x),\w_{n}(y),\w_{n}(z)\right)
f(\w_{n}(y),y)f(\w_{n}(z),z)d(\w_{n}(y))d(\w_{n}(z))}{\int\!\!\int_{\Omega^{(2)}_{W_n}}K\left(0,\w_{n}(y),\w_{n}(z)\right)
f(\w_{n}(y),y)f(\w_{n}(z),z)d(\w_{n}(y))d(\w_{n}(z))}.$$

$$=\prod_{\rangle y,\ x_{\downarrow}\langle\in S(x)}\frac{\int\!\!\int_{\Omega^{(2)}_{W_n}}K\left(\sigma_{n-1}(x),\w_{n}(y),\w_{n}( x_{\downarrow})\right)
f(\w_{n}(y),y)f(\w_{n}( x_{\downarrow}),
x_{\downarrow})d(\w_{n}(y))d(\w_{n}(x_{\downarrow}))}{\int\!\!\int_{\Omega^{(2)}_{W_n}}K\left(0,\w_{n}(y),\w_{n}(
x_{\downarrow})\right) f(\w_{n}(y),y)f(\w_{n}( x_{\downarrow}),
x_{\downarrow})d(\w_{n}(y))d(\w_{n}( x_{\downarrow}))}.$$\\
 From $K(z_{1},z_{2},z_{3})\in\Re^+$
there exist $K_{1}(z_{1},z_{2})$ and $K_{2}(z_{1},z_{3})$ such
that $K(z_{1},z_{2},z_{3})=K_{1}(z_{1},z_{2})K_{2}(z_{1},z_{3}).$
As a result, we get
$$\frac{\int_{\Omega_{W_n}}K_{2}\left(\sigma_{n-1}(x),\w_{n}(z)\right)
f(\w_{n}(z),z)d(\w_{n}(z))}{\int_{\Omega_{W_n}}K_{2}\left(0,\w_{n}(z)\right)
f(\w_{n}(z),z)d(\w_{n}(z))}=\ \ \ \ \ \ \ \ \ \ \ \ \ \ \ \ \ \  \
\ \ \ \ \ \ \ \ \ \ \ \ \ \ \ \ \ \ \ \ \ \ \ \ \ \ \ \ \ $$
\begin{equation}\label{e13} \ \ \ \ \ \ \ \  \ \ \  \ \ \  \ \ \ \ \ \ \  \ \ \ \  \ \ \ \ \  \
 =\frac{\int_{\Omega_{W_n}}K_{2}\left(\sigma_{n-1}(x),\w_{n}(x_{\downarrow})\right)
f(\w_{n}(x_{\downarrow}),x_{\downarrow})d(\w_{n}(x_{\downarrow}))}{\int_{\Omega_{W_n}}K_{2}\left(0,\w_{n}(x_{\downarrow})\right)
f(\w_{n}(x_{\downarrow}),x_{\downarrow})d(\w_{n}(x_{\downarrow}))}.\end{equation}\\
Let $\w_{n}(x_{\downarrow})=p,\ \w_{n}(y)=u, \ \w_{n}(z)=v$ and
$\sigma_{n-1}(x)=t.$  Then (\ref{e13}) can be written as
\begin{equation}\label{e14}\frac{\int_{0}^{1}K_{2}(t,v)h(v,z)dv}{\int_{0}^{1}K_{2}(0,v)h(v,z)dv}=
\frac{\int_{0}^{1}K_{2}(t,p)h(p,x_{\downarrow})dp}{\int_{0}^{1}K_{2}(0,p)h(p,x_{\downarrow})dp}.\end{equation}\\
Similarly, we get
\begin{equation}\label{e15}\frac{\int_{0}^{1}K_{1}(t,u)h(u,y)du}{\int_{0}^{1}K_{1}(0,u)h(u,y)du}=
\frac{\int_{0}^{1}K_{1}(t,p)h(p,x_{\downarrow})dp}{\int_{0}^{1}K_{1}(0,p)h(p,x_{\downarrow})dp}.\end{equation}\\
By (\ref{e14}) and (\ref{e15})
 $$h(t,x)=\frac{\int_0^1\!\int_{0}^{1}K(t,p_{1},p_{2})h(p_{1},x_{\downarrow})h(p_{2},x_{\downarrow})dp_{1}dp_{2}}
 {\int_0^1\!\int_{0}^{1}K(0,p_{1},p_{2})h(p_{1},
 x_{\downarrow})h(p_{2},x_{\downarrow})dp_{1}dp_{2}}.$$\\
Analogously,
$$h(\w_{n-1}(x),y)=\frac{\int_0^1\!\int_{0}^{1}K(\w_{n-1}(x),p_{1},p_{2})h(p_{1},x)h(p_{2},x)dp_{1}dp_{2}}
 {\int_0^1\!\int_{0}^{1}K(0,p_{1},p_{2})h(p_{1},x)h(p_{2},x)dp_{1}dp_{2}}
 =h(\w_{n-1}(x),z).$$\\
 From the last equation and Proposition \ref{p1} we get
$h(\cdot,y)=h(\cdot,z)=h(\cdot,x_{\downarrow})=h_{1}$ and
$h(\cdot,x)=h_{2}$. If $h_{1}=h_{2}$ then the corresponding
measure is \emph{translation-invariant} and if $h_{1}\neq h_{2}$
then it is $G_{k}^{(2)}-$ \emph{periodic}. This completes the
proof.\end{proof}

\begin{rk}\label{rem0} Theorem \ref{thm2} is a generalization of Theorem 1 in
\cite{rh2015} from the case $J_{3}=J_{1}=\alpha=0, J\neq 0$ to the
case $J_{3}^{2}+J_{1}^{2}+J^{2}+\alpha^{2}\neq 0$. Indeed, if
 $J_{3}=J=\alpha=0,\ J_{1}\neq 0$ then
$K(t,u,v)=\exp\{J\beta\xi_{3}(t,u)\}\exp\{J\beta\xi_{3}(t,v)\}=\vartheta(t,u)\cdot\vartheta(t,v)\in
\Re^+.$ \end{rk}

 Theorem \ref{thm2} reduces the problem of
finding $H$-periodic solutions of (\ref{e5}) to finding of
$G^{(2)}_k$ -periodic or \emph{translation-invariant} solutions to
(\ref{e5}). We say that function $f(t,x)$ is a
translation-invariant if, for some function $f_{1}(t),$
$f(t,x)=f(t)$, for all $x\in V$. Similarly, $f(t,x)$ is
$G^{(2)}_k$ -periodic if, for some functions $f_1(t)$ and
$f_2(t)$,
 \[ f(t, x) =
\begin{cases}
f_{1}(t) & \text{if $x\in G_{k}^{(2)}$}\,; \\
f_{2}(t) & \text{if $x\in G_{k}\setminus G_{k}^{(2)}$}\,.
\end{cases} \]\\
Consequently, for $K(\alpha,\beta,\gamma)\in\Re^+$ it remains to
study only two equations:

\begin{equation}\label{e18}f(t)=\frac{\int_0^1\!\!\int_0^1K(t,u,v)f(u)f(v)dudv}{\int_0^1\!\!\int_0^1K(0,u,v)f(u)f(v)dudv},
\end{equation}
and
\begin{equation}\label{e19}f(t)=\frac{\int_0^1\!\!\int_0^1K(t,u,v)g(u)g(v)dudv}{\int_0^1\!\!\int_0^1K(0,u,v)g(u)g(v)dudv},
\ \ \
g(t)=\frac{\int_0^1\!\!\int_0^1K(t,u,v)f(u)f(v)dudv}{\int_0^1\!\!\int_0^1K(0,u,v)f(u)f(v)dudv}.
\end{equation}
\begin{ex}\label{ex1} If $K(t,u,v)=\zeta(t,u)+\zeta(t,v), \
\zeta(t,u)\in C[0,1]^{2}$ then (\ref{e5}) has a unique periodic
solution.
\end{ex}

\begin{proof} By Theorem \ref{thm2} it's sufficient to check
that equations (\ref{e18}) and (\ref{e19}). For $f(t,x)=f(t),\
\textrm{for all}\ x\in V$ we get\\
$$f(t)=\frac{\int_0^1\!\!\int_0^1\left(\zeta(t,u)+\zeta(t,v)\right)f(u)f(v)dudv}
{\int_0^1\!\!\int_0^1(\zeta(0,u)+\zeta(0,v))f(u)f(v)dudv}=\frac{\int_0^1
\zeta(t,u)f(u)du}{\int_0^1\zeta(0,u)f(u)du}=(Af)(t).$$\\
The equation $(Af)(t)=f(t), \ f(t)>0$ has unique a solution (see
\cite{re}). Similarly, (\ref{e19}) can be written as
$(Af)(t)=g(t), \ (Ag)(t)=f(t).$ In \cite{rh2015} it is proved that
this system of equation has not any solution in $\{(f,g)\in
(C[0,1])^{2}| \ f(t)>0,\ g(t)>0\}.$
\end{proof}

\section{An example of non-uniqueness of Gibbs measures for Hamiltonian (\ref{e1})}

 Define the operator $W:C[0,1]\to C[0,1]$ by
\begin{equation}\label{e21}(Wf)(t)=\int_0^1\!\int^1_0
K(t,u,v)f(u)f(v)dudv.
\end{equation} Then equation (\ref{e18}) can be written as
\begin{equation}\label{e22}f(t)=(Af)(t)={(Wf)(t)\over (Wf)(0)}, \ f\in
C^+[0,1]. \end{equation} Denote

$$\xi_{1}(t,u,v)=\frac{1}{\beta J_{3}}\ln\left(1+\left(t-\frac{1}{2}\right)^{\tau}
\left(u-\frac{1}{2}\right)^{\tau}\left(v-\frac{1}{2}\right)^{\tau}\left(4^{\tau}
(\tau+1)^{2}-\frac{1}{\left(v-\frac{1}{2}\right)^{\tau}+1}\right)\right),$$
where $t,u,v \in [0,1], \ \tau\in\{\frac{p}{q}\in\mathbb{Q}\ |\
p,q \ \textrm{odd positive numbers}\}$. Then, for the kernel
$K_{\tau}(t,u,v)$ of the integral operator (\ref{e22}) we have
$$K_{\tau}(t,u,v)=1+\left(t-\frac{1}{2}\right)^{\tau}
\left(u-\frac{1}{2}\right)^{\tau}\left(v-\frac{1}{2}\right)^{\tau}\left(4^{\tau}
(\tau+1)^{2}-\frac{1}{\left(v-\frac{1}{2}\right)^{\tau}+1}\right).$$\\
 Clearly, for all $t,u,v \in[0,1],$ we have
$\lim_{\tau\rightarrow 0}K_{\tau}(t,u,v)>0.$ As a result we get
following remark

\begin{rk}\label{rem5}  There exists $\tau_{0}$ such that for
every $\tau\geq \tau_{0}$ the function $K_{\tau}(t,u,v)$ is a
positive function.
\end{rk}
Put $$\Im=\left\{\frac{p}{q}\in\mathbb{Q}\ |\ p,q \ \textrm{odd
positive numbers}\right\}\bigcap \left\{\tau\in \mathbb{Q}\ | \
K_{\tau}(t,u,v)>0\right\}.
$$
\begin{pro}\label{p3.1.} For $\tau\in \Im$ the operator
$A:$
$$(Af)(t)={(Wf)(t)\over (Wf)(0)},$$ in the space $C[0,1]$ has at least two
strictly positive fixed points.
\end{pro}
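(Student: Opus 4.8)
Write $\phi(s)=\left(s-\tfrac12\right)^{\tau}$. The plan is to exploit the affine structure of the kernel and then solve the resulting algebraic system explicitly. Since $\tau=p/q$ with $p,q$ odd, $\phi$ is odd about $s=\tfrac12$ (that is $\phi(1-s)=-\phi(s)$) and takes values in $\left[-2^{-\tau},2^{-\tau}\right]$. The kernel is affine in its first slot, $K_{\tau}(t,u,v)=1+\phi(t)\,\Psi(u,v)$ with $\Psi(u,v)=\phi(u)\phi(v)\bigl(4^{\tau}(\tau+1)^2-(\phi(v)+1)^{-1}\bigr)$, so that $(Wf)(t)=\bigl(\int_0^1 f\bigr)^2+\phi(t)\int_0^1\!\int_0^1\Psi(u,v)f(u)f(v)\,du\,dv$ is itself affine in $\phi(t)$. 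Dividing by $(Wf)(0)$ then shows that \emph{every} fixed point of $A$ is forced to have the form $f(t)=c_0+c_1\phi(t)$; moreover setting $t=0$ in $f=Af$ gives $f(0)=1$, i.e.\ $c_0-2^{-\tau}c_1=1$, equivalently $c_1=2^{\tau}(c_0-1)$.

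First I would compute the moments produced by inserting $f=c_0+c_1\phi$. Oddness gives $\int_0^1\phi=0$ and $\int_0^1\phi^2=I_2:=\tfrac{1}{4^{\tau}(2\tau+1)}$, while exploiting oddness and folding the interval about $\tfrac12$ reduces the two $\Psi$-weighted integrals to $\int_0^1\phi\,(\cdots)=2L$ and $\int_0^1\phi^2(\cdots)=N-2L$, where $N:=\tfrac{(\tau+1)^2}{2\tau+1}$ and $L:=\int_0^{1/2}\tfrac{s^{2\tau}}{1-s^{2\tau}}\,ds>0$; note that the constant $4^{\tau}(\tau+1)^2$ in the kernel is chosen precisely so the factor $4^{\tau}$ cancels against $\int_0^1\phi^2$ and leaves the clean value $N$. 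Matching the coefficients of $\phi^0$ and $\phi^1$ and using $c_1=2^{\tau}(c_0-1)$, the system collapses to $(c_0-1)\,\bigl[\,c_0-I_2\bigl(2c_0L+2^{\tau}(c_0-1)(N-2L)\bigr)\bigr]=0$. This exhibits the two fixed points: the constant $f_1\equiv1$ (for which one checks directly that $\int_0^1\!\int_0^1 K_{\tau}\,du\,dv=1$, since the odd moment kills the $\phi(t)$ term), and a second solution $f_2=c_0^{*}+2^{\tau}(c_0^{*}-1)\phi$, where $c_0^{*}$ is the unique root of the linear factor.

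It remains to prove that $f_2$ is strictly positive, and this is the only real obstacle. Because $\phi$ ranges over $[-2^{-\tau},2^{-\tau}]$ and $c_1=2^{\tau}(c_0^{*}-1)$, the two extreme values of $f_2$ are exactly $1$ and $2c_0^{*}-1$, so $\min_{[0,1]}f_2=\min\{1,\,2c_0^{*}-1\}$ and strict positivity is equivalent to the single inequality $c_0^{*}>\tfrac12$. Solving the linear factor yields
\[
c_0^{*}=\frac{2^{\tau}I_2\,(2L-N)}{\,1-2LI_2-2^{\tau}I_2(N-2L)\,},
\]
so the task becomes showing that this quotient exceeds $\tfrac12$ for the admissible exponents $\tau\in\Im$. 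I expect this to be the hard part: it needs sharp two-sided control of $L(\tau)$ relative to $N$ and $I_2$, and it is exactly here that the precise constant $4^{\tau}(\tau+1)^2$ and the defining restriction $\tau\in\Im$ (which, via Remark \ref{rem5}, keeps $K_{\tau}>0$ and pins down the relevant range of $\tau$) are used to make the numerator and denominator share a sign and the quotient clear $\tfrac12$. Once that estimate is secured, $f_1\equiv1$ and $f_2$ are two distinct strictly positive fixed points of $A$, which is the assertion.
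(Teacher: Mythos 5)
Your structural reduction is correct and in fact sharper than what the paper does: since $(Wf)(t)$ is affine in $\phi(t)=\left(t-\tfrac12\right)^{\tau}$, every fixed point of $A$ must be of the form $c_0+c_1\phi$ with $f(0)=1$, and your moment computations ($\int_0^1\phi\,dt=0$, $\int_0^1\phi^2\,dt=I_2=4^{-\tau}(2\tau+1)^{-1}$, and the folded values $2L$ and $N-2L$) check out, as does the factored equation $(c_0-1)\bigl[c_0-I_2\bigl(2c_0L+2^{\tau}(c_0-1)(N-2L)\bigr)\bigr]=0$. But the proof is incomplete at precisely the point you flag yourself: you never establish $c_0^{*}>\tfrac12$, i.e.\ the strict positivity of the second fixed point, which is the entire content of the proposition beyond routine algebra. ``I expect this to be the hard part'' followed by ``once that estimate is secured'' is a statement of intent, not an argument; as submitted you have produced $f_1\equiv1$ and one further \emph{formal} affine solution of undetermined sign, which does not prove the statement.

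For comparison, the paper argues by guess-and-verify: it exhibits $f_1\equiv1$ and $f_2(t)=\frac{2^{\tau}}{2^{\tau}-1}\left(1+\left(t-\tfrac12\right)^{\tau}\right)$ and checks $Af_i=f_i$ by direct integration, after which positivity of $f_2$ is immediate from $1+\phi(t)\geq 1-2^{-\tau}>0$. Your anticipated ``hard estimate'' on $L$ is actually an artifact of the constant $4^{\tau}(\tau+1)^{2}$ printed in the definition of $K_{\tau}$: the paper's own verification of $Af_2=f_2$ works with the constant $16^{\tau}(2\tau+1)^{2}=I_2^{-2}$ (note the factors $16^{\tau}(2\tau+1)^{2}$ and $8^{\tau}(2\tau+1)^{2}$ in its display, and that the double integral there equals $I_2^{2}$, the reciprocal of the misprinted value $16^{\tau}(2\tau+1)^{2}$). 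With that constant your $N$ becomes $I_2^{-1}=4^{\tau}(2\tau+1)$ and your linear factor rearranges to $\bigl(1-2LI_2\bigr)\bigl(c_0-2^{\tau}(c_0-1)\bigr)=0$: the $L$-dependence cancels completely, $c_0^{*}=\frac{2^{\tau}}{2^{\tau}-1}>1$, the resulting $f_2$ is exactly the paper's solution, and $\min_{[0,1]}f_2=1$, so positivity is trivial. With the literal constant $4^{\tau}(\tau+1)^{2}$ the situation is worse than you suspect: $L$ does not cancel, the paper's $f_2$ is then \emph{not} a fixed point at all, and the sign of your $c_0^{*}$ would genuinely need estimates you do not supply. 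So the positivity step is a real gap; had you closed it using the internally consistent constant, your method would actually yield more than the paper's, namely that these two are the \emph{only} fixed points of $A$.
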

\begin{proof} a) Let $f_{1}(t)\equiv1.$ Then from the equality
$$\int^1_0\!\int_{0}^1\left(u-\frac{1}{2}\right)^{\tau}\left(v-\frac{1}{2}\right)^{\tau}\left(4^{\tau}
(\tau+1)^{2}-\frac{1}{\left(v-\frac{1}{2}\right)^{\tau}+1}\right)dudv=0,$$\\
we have
$$(Af_{1})(t)=\frac{\int^1_0\!\int_{0}^1 \left[1+\left(t-\frac{1}{2}\right)^{\tau}
\left(u-\frac{1}{2}\right)^{\tau}\left(v-\frac{1}{2}\right)^{\tau}\left(4^{\tau}
(\tau+1)^{2}-\frac{1}{\left(v-\frac{1}{2}\right)^{\tau}+1}\right)\right]dudv}{\int^1_0\!\int_{0}^1
\left[1-\left(\frac{1}{2}\right)^{\tau}
\left(u-\frac{1}{2}\right)^{\tau}\left(v-\frac{1}{2}\right)^{\tau}\left(4^{\tau}
(\tau+1)^{2}-\frac{1}{\left(v-\frac{1}{2}\right)^{\tau}+1}\right)\right]dudv}=1.$$
b) Denote
$$f_{2}(t)\equiv\frac{2^{\tau}}{2^{\tau}-1}\left(1+\left(t-\frac{1}{2}\right)^{\tau}\right).$$\\
Clearly, $f_{2}\in C[0,1]$ and the function $f_{2}(t)$ is strictly
positive. Then $(Af_{2})(t)$ is equal to
$$\frac{\int^1_0\!\int_{0}^1
\left[1+\left(t-\frac{1}{2}\right)^{\tau}
\left(u-\frac{1}{2}\right)^{\tau}\left(v-\frac{1}{2}\right)^{\tau}\left(4^{\tau}
(\tau+1)^{2}-\frac{1}{\left(v-\frac{1}{2}\right)^{\tau}+1}\right)\right]
\left(1+\left(u-\frac{1}{2}\right)^{\tau}\right)\left(1+\left(v-\frac{1}{2}\right)^
{\tau}\right)dudv}{\int^1_0\!\int_{0}^1
\left[1-\left(\frac{1}{2}\right)^{\tau}
\left(u-\frac{1}{2}\right)^{\tau}\left(v-\frac{1}{2}\right)^{\tau}\left(4^{\tau}
(\tau+1)^{2}-\frac{1}{\left(v-\frac{1}{2}\right)^{\tau}+1}\right)\right]
\left(1+\left(u-\frac{1}{2}\right)^{\tau}\right)\left(1+\left(v-\frac{1}{2}\right)^{\tau}\right)dudv}.$$\\
We have

$$\int^1_0\!\int_{0}^1\left(1+\left(u-\frac{1}{2}\right)^{\tau}\right)\left(1+\left(v-\frac{1}{2}\right)^
{\tau}\right)dudv=1,$$\\ and
$$\int^1_0\!\int_{0}^1\left(u-\frac{1}{2}\right)^{\tau}\left(v-\frac{1}{2}\right)^{\tau}
\left(1+\left(u-\frac{1}{2}\right)^{\tau}\right)dudv=0.$$\\
Consequently, one gets

$$(Af_{2})(t)=\frac{1+16^{\tau}
(2\tau+1)^{2}\left(t-\frac{1}{2}\right)^{\tau}\int^1_0\!\int_{0}^1
\left(u-\frac{1}{2}\right)^{\tau}\left(v-\frac{1}{2}\right)^{\tau}
\left(1+\left(u-\frac{1}{2}\right)^{\tau}\right)\left(1+\left(v-\frac{1}{2}\right)^
{\tau}\right)dudv}{1-8^{\tau} (2\tau+1)^{2}\int^1_0\!\int_{0}^1
\left(u-\frac{1}{2}\right)^{\tau}\left(v-\frac{1}{2}\right)^{\tau}
\left(1+\left(u-\frac{1}{2}\right)^{\tau}\right)\left(1+\left(v-\frac{1}{2}\right)^
{\tau}\right)dudv}.$$\\
Since

$$\int^1_0\!\int_{0}^1\left(u-\frac{1}{2}\right)^{\tau}\left(v-\frac{1}{2}\right)^{\tau}
\left(1+\left(u-\frac{1}{2}\right)^{\tau}\right)\left(1+\left(v-\frac{1}{2}\right)^
{\tau}\right)dudv=16^{\tau}(2\tau+1)^2,$$\\
we have

$$(Af_{2})(t)=\frac{1+\left(t-0.5\right)^{\tau}}{1-0.5^\tau}=f_{2}(t).$$\\
This completes the proof.
\end{proof}
Thus, we can conclude with the following
\begin{thm}\label{thm3} Let $\sigma\in\Omega_{V}$ and $\tau\in \Im$. Then the model
$$H(\sigma)=-\frac{1}{\beta}\sum\limits_{\langle y,x,z\rangle\atop{x,y,z}\in V}\ln\left[1+\left(\sigma(x)-\frac{1}{2}\right)^{\tau}
\left(\sigma(y)-\frac{1}{2}\right)^{\tau}\left(\sigma(z)-\frac{1}{2}\right)^{\tau}\left(4^{\tau}
(\tau+1)^{2}-\frac{1}{\left(\sigma(z)-\frac{1}{2}\right)^{\tau}+1}\right)\right]$$
on the Cayley tree $\Gamma_2$ has at least two
translation-invariant Gibbs measures.
\end{thm}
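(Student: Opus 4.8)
The plan is to identify the Hamiltonian in the statement as the specialization of the general model (\ref{e1}) to the parameters $J=J_{1}=\alpha=0$, $J_{3}\neq 0$, together with the particular $\xi_{1}$ introduced just before Remark \ref{rem5}. For these parameters the kernel (\ref{e20}) collapses to $K(t,u,v)=\exp\{J_{3}\beta\,\xi_{1}(t,u,v)\}=K_{\tau}(t,u,v)$, and for $\tau\in\Im$ Remark \ref{rem5} together with the definition of $\Im$ guarantees $K_{\tau}>0$; the restriction to odd/odd rational $\tau$ is precisely what keeps $(t-\tfrac12)^{\tau}$ real for $t<\tfrac12$. First I would reduce the construction of translation-invariant Gibbs measures to a scalar fixed-point problem: substituting a translation-invariant $f(t,x)=f(t)$ into the consistency equation (\ref{e5}) and using that on $\Gamma_{2}$ every vertex $x\neq x^{0}$ carries a single second-neighbor pair $\rangle y,z\langle$ in $S(x)$, the product over $S(x)$ reduces to one factor, so (\ref{e5}) becomes exactly (\ref{e18}), i.e. $f=Af$ with $A$ given by (\ref{e22}).

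Next I would invoke Proposition \ref{p3.1.}, which for $\tau\in\Im$ supplies two strictly positive fixed points of $A$ in $C[0,1]$, namely $f_{1}(t)\equiv 1$ and $f_{2}(t)=\frac{2^{\tau}}{2^{\tau}-1}\bigl(1+(t-\tfrac12)^{\tau}\bigr)$. Each $f_{i}$ is therefore a strictly positive, translation-invariant solution of (\ref{e5}). By the sufficiency direction of Theorem \ref{th1}, the probability distributions $\mu^{(n)}_{i}$ determined by $f_{i}$ (through $f_{i}(t)=\exp(h^{(i)}_{t}-h^{(i)}_{0})$, with the $x$-independent choice $h^{(i)}_{t}=\ln f_{i}(t)$) are compatible, so Kolmogorov's extension theorem yields for each $i$ a splitting Gibbs measure $\mu_{i}$. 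Since $h^{(i)}$ does not depend on $x$, each $\mu_{i}$ is translation-invariant.

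It remains to see that $\mu_{1}\neq\mu_{2}$. Both solutions respect the normalization $f_{i}(0)=1$ forced by (\ref{e5}) — indeed $f_{2}(0)=\frac{2^{\tau}}{2^{\tau}-1}(1-2^{-\tau})=1$ — so they are directly comparable, yet $f_{1}$ is constant in $t$ while $f_{2}$ genuinely varies, whence $f_{1}\neq f_{2}$. Because $f$ is the gauge-invariant datum encoding the boundary field (the additive freedom $h_{t,x}\mapsto h_{t,x}+c_{x}$ being absorbed into the partition functions $Z_{n}$, and $f$ being recoverable from the single-site conditionals of $\mu$), distinct normalized solutions of (\ref{e5}) determine distinct measures; hence $\mu_{1}\neq\mu_{2}$ and the model admits at least two translation-invariant Gibbs measures. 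I expect essentially all the difficulty to reside not in this assembly but in Proposition \ref{p3.1.} itself: the genuinely delicate point is that $\xi_{1}$ (equivalently $K_{\tau}$) was engineered so that the relevant moments of $(u-\tfrac12)^{\tau}(v-\tfrac12)^{\tau}$ either vanish or equal $16^{\tau}(2\tau+1)^{2}$, turning $f_{2}$ into an exact second fixed point of $A$. It is that computation, rather than the soft measure-theoretic packaging, that carries the theorem.
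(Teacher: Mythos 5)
Your proposal is correct and follows essentially the same route as the paper: the paper, too, proves Theorem \ref{thm3} by specializing the kernel (\ref{e20}) to $J=J_{1}=\alpha=0$, $J_{3}\neq 0$ so that $K=K_{\tau}$, reducing translation-invariant solutions of (\ref{e5}) on $\Gamma_{2}$ to the fixed-point equation (\ref{e18}) for the operator $A$ of (\ref{e22}), and then invoking Proposition \ref{p3.1.} for the two strictly positive fixed points $f_{1}\equiv 1$ and $f_{2}(t)=\frac{2^{\tau}}{2^{\tau}-1}\left(1+\left(t-\frac{1}{2}\right)^{\tau}\right)$, with Theorem \ref{th1} and Kolmogorov extension yielding the corresponding splitting Gibbs measures. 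Your only additions — checking the normalization $f_{2}(0)=1$ and arguing explicitly that distinct normalized solutions give distinct measures, hence $\mu_{1}\neq\mu_{2}$ — are details the paper leaves implicit, not a different method.
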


Previously, it was known that for model (\ref{e1}) with
$J_{3}=J=\alpha=0,\ J_{1}\neq 0$ there exist $G^{(2)}_k$-periodic
and translation-invariant Gibbs measures it has been proved that
for some $K(t,u,v)$ (see \cite{ehr2012}, \cite{rh2015}) here exist
phase transitions (by phase transition we mean non-uniqueness of a
splitting Gibbs measure). In this section we considered
translation-invariant Gibbs measures for Hamiltonian (\ref{e1}) in
the case $J_{3}\neq 0, J=J_{1}=\alpha=0$. In other cases the problem of existence of phase transition remains open.\\

\begin{center}ACKNOWLEDGMENTS
\end{center}
The authors would like to thank Professor Yu.M.Suhov  for useful
remarks.


\begin{thebibliography}{99}


\bibitem{1} Baxter, R.J. : Exactly Solved Models in Statistical Mechanics
(Academic, London, 1982).

\bibitem{2}  Bleher, P.M.  and  Ganikhodjaev N.N. : On pure phases of the Ising model
on the Bethe lattice. {\it Theor. Probab. Appl.} {\bf 35} (1990),
216-227.

\bibitem{3}  Bleher, P.M.,  Ruiz, J.  and   Zagrebnov V.A. : On the
purity of the limiting Gibbs state for the Ising model on the
Bethe lattice. {\it Journ. Statist. Phys}. {\bf 79} (1995),
473-482.

\bibitem{DS} Dashjan, Yu.R., Suhov, Yu. M. : On the problem of the Gibbs
description of random processes with the discrete time. {\it
Soviet Math. Doklady- Doklady AN SSR 242}. {\bf 3} (1978),
513–516.

\bibitem{ehr2013} Eshkabilov Yu.Kh., Haydarov F.H.,  Rozikov U.A. : Uniqueness
of Gibbs measure for models with uncountable set of spin values on a Cayley tree.
 {\it Math. Phys. Anal. Geom}. {\bf 16(1)} (2013), 1-17.

\bibitem{ehr2012} Eshkabilov Yu.Kh., Haydarov F.H., Rozikov U.A. :
Non-uniqueness of Gibbs measure for models with uncountable set of
spin values on a Cayley Tree {\it J.Stat.Phys}. {\bf 147} (2012),
779-794.

\bibitem{enh2015} Eshkabilov Yu.Kh., Nodirov, Sh.D., Haydarov F.H.: Positive fixed points of quadratic operators
and Gibbs Measures. {\it Positivity.} (2016),  DOI:
10.1007/s11117-015-0394-9.

\bibitem{eh2015} Eshkabilov Yu.Kh., Haydarov F.H. : On positive solutions of the homogenous
Hammerstein integral equation. {\it Nanosystem: Chem, Phys and
Math.} (2015) DOI: 10.17586/2220-8054-2015-6-5.

\bibitem{gr7}  Ganikhodjaev, N.N. : Exact solution of an Ising model on the Cayley tree
 with competing ternary and binary interactions. {\it Theor. Math.
Phys.} {\bf 130} (2002), 419-424.

\bibitem{6}  Ganikhodjaev, N.N. and Rozikov, U.A. : The Potts model with
countable set of spin values on a Cayley Tree. {\it Letters Math.
Phys.} {\bf 75} (2006), 99-109.

\bibitem{p1a} Ganikhodjaev, N.N. and Rozikov, U.A. : On Ising model with four
competing interactions on Cayley tree. {\it Math. Phys. Anal.
Geom.} {\bf 12} (2009), 141-156.

\bibitem{gr8} Ganikhodjaev, N.N., Pah, C.H and Wahiddin, M.R.B. : Exact solution of an Ising model
 with completing interections on a Cayley tree. {\it J. Phys. A.
Geom.} {\bf 36} (2003), 4283-4289.

\bibitem{gr9} Ganikhodjaev, N.N., Pah, C.H and Wahiddin, M.R.B. : An Ising model with three competing interactions
on a Cayley tree. {\it J. Math. Phys.} {\bf 45} (2004), 3645-3658.

\bibitem{new1} Jahnel, B., Christof, K., Botirov, G. : Phase transition and critical value of
nearest-neighbor system with uncountable local state space on
Cayley tree. {\it Math. Phys. Anal. Geom.} {\bf 17} (2014)
323-331.

\bibitem{7} Kotecky, R. and Shlosman, S.B. : First-order phase transition in large
entropy lattice models. {\it Commun. Math. Phys.} {\bf 83} (1982),
493-515.

\bibitem{MSS} Mazel, A., Suhov, Y., Stuhl, I. : A classical WR model with $q$ particle
types. {\it J.Stat.Phys}. {\bf 159} (2015), 1040-1086.

\bibitem{MSSZ}  Mazel, A., Suhov,Y., Stuhl, I., Zohren, S. : Dominance of most
tolerant species in multi-type lattice Widom-Rowlinson models.
{\it Journ. Stat. Mech}. (2014) P08010; DOI:
10.1088/1742-5468/2014/8/P08010.

\bibitem{gr15} Monroe, J.L. : Phase diagrams of Ising models on
Husime trees two. {\it J.Statist.Phys.} {\bf 67} (1992),
1185-2000.

\bibitem{gr16} Monroe, J.L. : A new criterion for the location of phase transitions
 for spin system on a recursive lattice. {\it Phys.Lett.A} {\bf 188} (1994),
80-84.

\bibitem{gr17} Mukhamedov, F.M., Rozikov, U.A. : On Gibbs measures
of models with completing ternary and binary interactions and
corresponding von Neumann algebras. {\it J.Statist.Phys.} {\bf
114} (2004), 825-848.

\bibitem{PS1} Pigorov, S.A., Sinai, Ya.G. : Phase diagrams of classical lattice systems (Russian). {\it Theor. and Math. Phys.} {\bf 25} (1975),
358-369.

\bibitem{PS2} Pigorov, S.A., Sinai, Ya.G. : Phase diagrams of classical lattice systems. Continuation (Russian). {\it Theor. and Math. Phys.} {\bf 26} (1976),
61-76.

\bibitem{11} Preston, C. : {\it Gibbs states on countable sets} (Cambridge
University Press, London 1974).

\bibitem{p2} Rozikov, U.A. : Partition structures of the Cayley  tree and applications for describing periodic
Gibbs distributions. {\it Theor. and Math. Phys.} {\bf 112}
(1997), 929-933.

\bibitem{re} Rozikov, U.A. and Eshkabilov, Yu.Kh. : On models with uncountable set
of spin values on a Cayley tree: Integral equations. {\it Math.
Phys. Anal. Geom.} {\bf 13} (2010), 275-286.

\bibitem{rh2015} Rozikov U. A. Haydarov F. H. : Periodic Gibbs measures for models with uncountable
set of spin values on a Cayley tree. {\it I.D.A.Q.P.} {\bf 18}
(2015), 1-22.

\bibitem{12} Sinai, Ya.G. : {\it Theory of phase transitions: Rigorous
Results} (Pergamon, Oxford, 1982).

\bibitem{13} Spitzer, F. : Markov random fields on an infinite tree, {\it
Ann. Prob.} {\bf 3} (1975), 387--398.

\bibitem{14} Suhov, Y.M. and  Rozikov, U.A. : A hard - core model on a
Cayley tree: an example of a loss network, {\it Queueing Syst.}
{\bf 46} (2004), 197--212.

\bibitem{uar} Utkir A. Rozikov. : {\it Gibbs measures on a Cayley
trees}, World Sci. Pub, Singapore (2013).

\bibitem{16}  Zachary, S. : Countable state space Markov random fields and
Markov chains on trees, {\it Ann. Prob.} {\bf 11} (1983),
894--903.

\end{thebibliography}
\end{document}